\newtheorem{theorem}{Theorem}[section]
\newtheorem{lemma}[theorem]{Lemma}
\newtheorem{proposition}[theorem]{Proposition}
\newtheorem{corollary}[theorem]{Corollary}
\theoremstyle{definition}
\newtheorem{definition}[theorem]{Definition}
\newtheorem{example}[theorem]{Example}
\theoremstyle{remark}
\numberwithin{equation}{section}
\begin{document}
\title[$(1-2u^k)$-constacyclic codes over $\mathcal{R}$]{$(1-2u^k)$-constacyclic codes over $\mathbb{F}_p+u\mathbb{F}_p+u^2\mathbb{F}_p+u^{3}\mathbb{F}_{p}+\dots+u^{k}\mathbb{F}_{p}$}
\author[Z. Raza and A. Rana]{Zahid Raza and Amrina Rana}
 \address{Department of Mathematics,College of Sciences, University of Sharjah, UAE.}
\email { zraza@sharjah.ac.ae}
\address{Department of Mathematics, National University of Computer and Emerging Sciences, Lahore Campus, Pakistan.}
\email {  amrina.rana.1@gmail.com}
%\date{January, 2015}
\subjclass[2010]{Primary 94B05, 94B15; Secondary 11T71, 13M99}
\keywords{Finite fields, cyclic codes, constacyclic codes.}

\begin{abstract}
Let $\mathbb{F}_p$ be a finite field and $u$ be an indeterminate. This article studies $(1-2u^k)$-constacyclic codes over the ring $\mathcal{R}=\mathbb{F}_p+u\mathbb{F}_p+u^2\mathbb{F}_p+u^{3}\mathbb{F}_{p}+\cdots+u^{k}\mathbb{F}_{p}$ where $u^{k+1}=u$. We illustrate the generator polynomials and investigate the structural properties of these codes via decomposition theorem.
\end{abstract}

\maketitle

\section{Introduction}
The study of coding theory was initiated by Blake in 1970. Many authors worked on different contexts of codes i.e., Linear codes, cylic codes, constacylic codes and etc. The breakthrough work on linear codes was done after remarkable paper of Hammon et.al \cite{HA} which showed non-linear binary codes can be constructed from cyclic codes over $\mathbb{Z}_{4}$ via its Gray images. So, cyclic code over the finite rings is one of  the significant kind of algebraic codes.
Constacyclic codes constitute a remarkable generalization of cyclic codes, hence form an important class of linear codes in coding theory. Constacyclic codes have practical applications in engineering, as they can be efficiently encoded using shift registers. They also have rich algebraic structures for efficient error detection and correction which further applies from data networking to satellite communications. There is a vast literature on constacyclic codes over finite fields and their applications for detailed see, \cite{MA,D3,ZW,KY,ZK}. Zhu and Wang investigated $(1-2u)$ Constacylclic codes over ${\mathbb{F}_{p}+u\mathbb{F}_{p}}$ where $v^{2}=v$ in \cite{ZW}. Moreover, Yildiz and Karadeniz \cite{AKY} studied $(1+v)$ Constacyclic codes of odd length over the ring $\mathbb{F}_{p}+u\mathbb{F}_{p}+v\mathbb{F}_{p}+uv\mathbb{F}_{p}$.

This present paper is focussed on the class of constacyclic codes over the ring $\mathcal{R}=\mathbb{F}_{p}+u\mathbb{F}_{p}+u^{2}\mathbb{F}_{p}+u^{3}\mathbb{F}_{p}+\cdots+u^{k}\mathbb{F}_{p}$ where $u^{k+1}=u$. It is the natural generalization of the results given by Mostafansab and Karimi on the $(1-2u^2)$-constacyclic codes over the ring  $\mathbb{F}_p+u\mathbb{F}_p+u^2\mathbb{F}_p$ in \cite{MK}.
Let $\alpha,\beta$ and $\gamma$ be maps from $\mathcal{R}^m$ to $\mathcal{R}^m$ given by
$$\hspace{-1.1cm}\alpha(s_0,s_1,\dots,s_{m-1})=(s_{n-1},s_0,s_1,\dots,s_{m-2})$$
$$\beta(s_0,s_1,\dots,s_{m-1})=(-s_{m-1},s_0,s_1,\dots,s_{m-2})~~~~~ \mbox{and}$$
$$\hspace{4mm}\gamma(s_0,s_1,\dots,s_{m-1})=((1-2u^k)s_{m-1},s_0,s_1,\dots,s_{m-2})$$
respectively.
Let $\mathcal{L}$ be a linear code of length $m$ over $\mathcal{R}$. Then $\mathcal{L}$ is said to be cyclic if $\alpha(\mathcal{L}) =\mathcal{L}$, negacyclic if $\beta(\mathcal{L})=\mathcal{L}$ and $(1-2u^k)$-constacyclic if $\gamma(\mathcal{L})=\mathcal{L}$.

Let $\mathcal{L}$ be a code of length $m$ over $\mathcal{R}$, and $P(\mathcal{L})$ be its polynomial representation, i.e., $$P(\mathcal{L})=\Big\{\sum\limits_{i=0}^{m-1}s_ia^i|(s_0,\dots,s_{m-1})\in\mathcal{L}\Big\}$$
It is easy to see that:
\begin{theorem}
Let $\mathcal{L}$ be a code of length $m$ over $\mathcal{R}$. Then  $\mathcal{L}$ is $(1-2u^k)$-constacyclic iff $P(\mathcal{L})$ is an ideal of $\mathcal{R}[a]/\langle a^m-(1-2u^k)\rangle$.
\end{theorem}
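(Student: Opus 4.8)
The plan is to exploit the fact that, in the quotient ring $\mathcal{R}[a]/\langle a^m-(1-2u^k)\rangle$, multiplication by the indeterminate $a$ realizes precisely the constacyclic shift $\gamma$. Concretely, for a word $s=(s_0,\dots,s_{m-1})\in\mathcal{R}^m$ with $P(s)=\sum_{i=0}^{m-1}s_ia^i$, I would form $aP(s)=\sum_{i=0}^{m-1}s_ia^{i+1}$ and reduce the top term using the defining relation $a^m=(1-2u^k)$. This yields $aP(s)=(1-2u^k)s_{m-1}+s_0a+\cdots+s_{m-2}a^{m-1}$, which is exactly $P(\gamma(s))$. Establishing the identity $P(\gamma(s))=a\,P(s)$ is the key computational step; everything else follows formally.

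Next I would record that $P$ is an $\mathcal{R}$-module isomorphism from $\mathcal{R}^m$ onto $\mathcal{R}[a]/\langle a^m-(1-2u^k)\rangle$, sending coordinatewise addition and scalar multiplication to the corresponding operations on polynomials, so that a linear code $\mathcal{L}$ is an $\mathcal{R}$-submodule if and only if $P(\mathcal{L})$ is closed under addition and under multiplication by constants from $\mathcal{R}$. For the forward implication, assuming $\gamma(\mathcal{L})=\mathcal{L}$, the identity above gives $a\,P(s)=P(\gamma(s))\in P(\mathcal{L})$ for every $s\in\mathcal{L}$, so $P(\mathcal{L})$ is stable under multiplication by $a$; combined with $\mathcal{R}$-linearity and an induction on powers of $a$, $P(\mathcal{L})$ absorbs multiplication by every element of $\mathcal{R}[a]$ and is therefore an ideal. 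Conversely, if $P(\mathcal{L})$ is an ideal it is in particular closed under multiplication by $a$, so $P(\gamma(s))=a\,P(s)\in P(\mathcal{L})$, whence $\gamma(s)\in\mathcal{L}$ by injectivity of $P$, giving $\gamma(\mathcal{L})\subseteq\mathcal{L}$.

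The one point that needs genuine care is upgrading this inclusion to the equality $\gamma(\mathcal{L})=\mathcal{L}$ demanded by the definition. Here I would first observe that $u^k$ is idempotent, since $u^{k+1}=u$ forces $u^{2k}=u^k$, and hence $(1-2u^k)^2=1-4u^k+4u^{2k}=1$; thus $1-2u^k$ is a unit in $\mathcal{R}$ equal to its own inverse. Consequently the relation $a\cdot a^{m-1}=a^m=(1-2u^k)$ shows that $a$ is a unit in the quotient ring, with inverse $a^{-1}=(1-2u^k)a^{m-1}$. Since multiplication by a unit carries any ideal bijectively onto itself, $a\,P(\mathcal{L})=P(\mathcal{L})$, which translates back through $P$ to $\gamma(\mathcal{L})=\mathcal{L}$. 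This unit computation is the crux that makes the equivalence an honest ``if and only if'' rather than a mere one-sided inclusion; the remaining verifications are routine.
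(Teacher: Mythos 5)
Your proof is correct and follows exactly the standard argument the paper has in mind: the paper itself offers no proof at all, dismissing the statement with ``It is easy to see that.'' The one nontrivial point --- upgrading the inclusion $\gamma(\mathcal{L})\subseteq\mathcal{L}$ to the equality $\gamma(\mathcal{L})=\mathcal{L}$ by observing that $u^{2k}=u^k$, hence $(1-2u^k)^2=1$ and $a$ is a unit in $\mathcal{R}[a]/\langle a^m-(1-2u^k)\rangle$ --- is precisely the detail the paper sweeps under the rug, and you handle it correctly.
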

Let $x=(a_0,a_1,\dots,a_{m-1}),$ $y=(b_0,b_1,\dots,b_{m-1})\in\mathcal{R}^m$. Then $x\cdot y=a_0b_0+a_1b_1 +\dots+a_{m-1}b_{m-1}$ is called Euclidean inner product on $\mathcal{R}^m$. The code $L^\bot=\{x\in\mathcal{R}^m| x\cdot y=0 \mbox{ for every } y\in\mathcal{L}\}$ is called the dual code of $\mathcal{L}$.

The mapping  $\Phi:\mathcal{R}\to\mathbb{F}_p^k$ defined by $$a_{0}+ua_{1}+u^{2}a_{2}+\dots+u^{k}a_{k}\mapsto(-a_{k},a_{1},a_{3},\dots,2a_{0}+a_{k})$$ is called Gray map, which can be extended to $\mathcal{R}^m$ as:

$$\hspace{-11.3cm}\Phi:\mathcal{R}^m\to\mathbb{F}_p^{km}$$
$$(s_0,s_1,\dots,s_{m-1})\mapsto(-a_{k}^{0},-a_{k}^{1}-\dots-a_{k}^{m-1},a_{1}^{0},a_{1}^{1},\dots,a_{1}^{n-1},a_{3}^{0},a_{3}^{1},\dots,$$
$$ \ \ \ \ \ \ \ \ \ \ \ \ \ \ \ \ \ a_{3}^{m-1},\dots,2a_{0}^{0}+a_{k}^{0},2a_{0}^{1}+a_{k}^{1},\dots,2a_{0}^{m-1}+a_{k}^{m-1})$$
where $s_i=a_{o}^{i}+ua_{1}^{i}+u^{2}a_{2}^{i}+\dots+u^{k}a_{k}^{i}$, $0\leq i\leq m-1$.

We denote by $\sigma_1,\sigma_2,\sigma_3,$ respectively the following elements of $\mathcal{R}$:
$$\sigma_1=(1-u^k),~~~\sigma_2=2^{-1}(u^{k-1}+u^k),~~~\sigma_3=2^{-1}(-u^{k-1}+u^k)$$
Note that $\sigma_1,\sigma_2$ and $\sigma_3$ are mutually orthogonal idempotents over $\mathcal{R}$ and $\sigma_1+\sigma_2+\sigma_3=1$.
Let $\mathcal{L}$ be a linear code of length $m$ over $\mathcal{R}$. Define
\begin{eqnarray*}
\mathcal{L}_1&=&\{a\in\mathbb{F}_p^m\mid\exists b,c\in\mathbb{F}_p^m,\sigma_1a+\sigma_2b+\sigma_3c\in\mathcal{L}\}\\
\mathcal{L}_2&=&\{b\in\mathbb{F}_p^m\mid\exists a,c\in\mathbb{F}_p^m,\sigma_1a+\sigma_2b+\sigma_3c\in\mathcal{L}\}\\
\mathcal{L}_3&=&\{c\in\mathbb{F}_p^m\mid\exists a,b\in\mathbb{F}_p^m,\sigma_1a+\sigma_2b+\sigma_3c\in\mathcal{L}\}
\end{eqnarray*}
Then $\mathcal{L}_1,\mathcal{L}_2$ and $\mathcal{L}_3$ are also linear codes of same length. Moreover, we can write uniquely as
$\mathcal{L}=\sigma_1\mathcal{L}_1\oplus\sigma_2\mathcal{L}_2\oplus\sigma_3\mathcal{L}_3.$

\section{Results and Discussions}

\begin{theorem}\label{T2}
Let $\gamma$ denote the $(1-2u^k)$-constacyclic shift of $\mathcal{R}^m$ and $\alpha$ be the cyclic shift of $\mathbb{F}^{km}_p$.
If $\Phi$ is the Gray map of $\mathcal{R}^m$ into $\mathbb{F}^{km}_p$, then $\Phi\gamma=\alpha\Phi$.
\end{theorem}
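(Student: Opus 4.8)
The plan is to verify the operator identity $\Phi\gamma=\alpha\Phi$ pointwise: since $\Phi$, $\gamma$, and $\alpha$ are all $\mathbb{F}_p$-linear, it suffices to show $\Phi(\gamma(s))=\alpha(\Phi(s))$ for an arbitrary word $s=(s_0,s_1,\dots,s_{m-1})\in\mathcal{R}^m$. I would fix the notation $s_i=a_0^i+ua_1^i+u^2a_2^i+\cdots+u^ka_k^i$ for the $\mathbb{F}_p$-coordinates of each entry, so that both sides become explicit vectors in $\mathbb{F}_p^{km}$ that can be compared coordinate by coordinate.

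The heart of the argument is the single computation of the twisted entry $(1-2u^k)s_{m-1}$, the only place where the constacyclic factor enters. Using the defining relation $u^{k+1}=u$, hence $u^{k+j}=u^{j}$ for $j\ge 1$ and in particular $u^{2k}=u^{k}$, I would multiply out $(1-2u^k)s_{m-1}$ and collect the coefficient of each power $u^{j}$. This yields the $\mathbb{F}_p$-coordinate vector $(a_0^{m-1},-a_1^{m-1},\dots,-a_{k-1}^{m-1},-2a_0^{m-1}-a_k^{m-1})$, in which the constant term is unchanged, the intermediate terms are negated, and the top term mixes $a_0^{m-1}$ and $a_k^{m-1}$.

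With this in hand I would apply the Gray map $\Phi$ to $\gamma(s)=((1-2u^k)s_{m-1},s_0,\dots,s_{m-2})$ and write out $\Phi(\gamma(s))$ block by block, each block being the list of one fixed Gray component taken over all $m$ positions. On the other side, $\alpha(\Phi(s))$ is obtained from $\Phi(s)$ by moving its last coordinate to the front, so it too is naturally described block by block. Comparing the two, the matching of the interior coordinates is immediate, and everything reduces to the boundary identities expressing that the Gray components of the twisted entry $(1-2u^k)s_{m-1}$ are precisely the components that $\alpha$ wraps from the tail back to the head; concretely, the first Gray component of $(1-2u^k)s_{m-1}$ equals $2a_0^{m-1}+a_k^{m-1}$, which is exactly the final coordinate of $\Phi(s)$ that $\alpha$ pushes to position zero.

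The step I expect to be the main obstacle is this boundary matching: one must check that the sign changes and the $a_0$--$a_k$ mixing produced by multiplication by $(1-2u^k)$ line up perfectly, component for component, with the cyclic reindexing carried out by $\alpha$. Once that bookkeeping is confirmed for every Gray component, the interior coordinates agree trivially, and linearity promotes the pointwise identity $\Phi(\gamma(s))=\alpha(\Phi(s))$ to the asserted equality of maps $\Phi\gamma=\alpha\Phi$.
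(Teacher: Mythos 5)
Your strategy is the same as the paper's: expand $(1-2u^k)s_{m-1}$ using $u^{k+1}=u$, apply $\Phi$ blockwise, and compare with the cyclic shift of $\Phi(\bar{s})$. Your computation of the twisted entry, giving the coordinate vector $(a_0^{m-1},-a_1^{m-1},\dots,-a_{k-1}^{m-1},-2a_0^{m-1}-a_k^{m-1})$, is correct and agrees with the paper's.

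The genuine gap is exactly the step you defer as ``bookkeeping.'' Write $\Phi(\bar{s})$ in its $k$ blocks of length $m$. The shift $\alpha$ moves the last coordinate $2a_0^{m-1}+a_k^{m-1}$ to position $0$ and, crucially, pushes the \emph{last entry of each block into the head of the next block}. So matching $\Phi(\gamma(\bar{s}))$ against $\alpha(\Phi(\bar{s}))$ requires $k$ wrap-around identities, not one: for each $j$, the $j$-th Gray component of $(1-2u^k)s_{m-1}$ must equal the $(j-1)$-st Gray component of $s_{m-1}$ (indices read cyclically). You verify only the $j=1$ case, $-(-2a_0^{m-1}-a_k^{m-1})=2a_0^{m-1}+a_k^{m-1}$, which does hold. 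The next one already fails: position $m$ of $\alpha(\Phi(\bar{s}))$ is $-a_k^{m-1}$ (the tail of the first block of $\Phi(\bar{s})$), while position $m$ of $\Phi(\gamma(\bar{s}))$ is the second Gray component of the twisted entry, namely $-a_1^{m-1}$; these are not equal in general. Hence the deferred coordinate check cannot be completed with the Gray map as defined, and the argument does not close. (The paper's own proof shares this defect: it exhibits only the position-$0$ coordinate and garbles the remaining comparison, so you have reproduced its route, gap included. A repair would require either a Gray map whose components are cyclically intertwined by multiplication by $1-2u^k$, or a weaker conclusion such as quasi-cyclicity of the image.)
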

\begin{proof}
Let $\bar{s}=(s_0,s_1,\dots,s_{m-1})\in\mathcal{R}^m$ where $s_i=a_{0}^{i}+ua_{1}^{i}+u^{2}a_{2}^{i}+\dots+u^{k}a_{k}^{i}$,
 and  $a_0^{i},a_1^{i},\dots,a_k^{i}\in\mathbb{F}_p$, for $0\leq i\leq m-1$. Then, we have

$$\hspace{-4.6cm}\gamma(\bar{s})=\big((1-2u^k)s_{m-1},s_0,s_1,\dots,s_{m-2}\big)$$
$$=\big(a_{0}^{m-1}-ua_{1}^{m-1}-u^{2}a_{2}^{m-1}-,\dots,+(-2a_{0}^{m-1}-a_{k}^{m-1})u^{k},a_{0}^{0}+ua_{1}^{0}+\dots+u^{k}a_{k}^{o},$$
$$a_{0}^{1}+ua_{1}^{1}+\dots+u^{k}a_{k}^{1},\dots,a_{0}^{m-2}+ua_{1}^{m-2}+\dots+u^{k}a_{k}^{m-2}).$$

Now, applying the Gray map $\Phi$, we obtained

$$\hspace{0cm}\Phi(\gamma(\bar{s}))=\big(2a_{0}^{m-1}+a_{k}^{m-1},-a_{k}^{0},-a_{k}^{1}-\dots-a_{k}^{m-2},2a_{0}^{m-1}+(-2a_{0}^{m-1}-a_{k}^{m-1})$$
$$2a_{0}^{0}+a_{k}^{0},2a_{0}^{1}+a_{k}^{1},\dots,2a_{0}^{m-1}+a_{k}^{1},\dots,2a_{0}^{m-1}+a_{k}^{m-1}).$$
On the other hand,
\begin{eqnarray*}
\alpha(\Phi(\bar{s}))&=&\alpha(-a_{k}^{0},-a_{k}^{1},\dots,a_{k}^{m-1},2a_{0}^{0}+a_{k}^{0},2a_{0}^{1}+a_{k}^{1},\dots,2a_{0}^{m-1}+a_{k}^{1},\dots,2a_{0}^{m-1}+a_{k}^{m-1})\\
&=&\big(2a_{0}^{m-1}+a_{k}^{m-1},-a_{k}^{m-1},-a_{k}^{0},-a_{k}^{1},\dots,a_{k}^{m-1},2a_{0}^{0}+a_{k}^{0},2a_{o}^{k}+a_{k}^{1},\dots,2a_{0}^{m-1}+a_{k}^{m-1})
\end{eqnarray*}
Therefore, $\Phi\gamma=\alpha\Phi.$
\end{proof}
%%%%%%%%%%%%%%%%%%%
\begin{theorem}\label{T3}
$\Phi(\mathcal{L})$ is  a
cyclic code over $\mathbb{F}_p$ of length $3m$.
\end{theorem}
\begin{proof}
Since $\mathcal{L}$ is a $(1-2u^k)$-constacyclic code over $\mathcal{R} \gamma(\mathcal{L})=\mathcal{L}$. Hence, $(\Phi\gamma)(\mathcal{L})=\Phi(\mathcal{L})$. 
By Theorem \ref{T2} we have $\alpha(\Phi(\mathcal{L}))=\Phi(\mathcal{L})\Rightarrow\Phi(\mathcal{L})$ is a cyclic code.
\end{proof}
Notice that $(1-2u^k)^m=1-2u^k$ if $m$ is odd and $(1-2u^k)^m=1$ if $m$ is even.
\begin{proposition}
The code $\mathcal{L}$ is a $(1-2u^k)$-constacyclic code iff  $\mathcal{L}^{\bot}$  is a $(1-2u^k)$-constacyclic.
\end{proposition}
\begin{proof}
The Proposition 2.4 of \cite{D3} grants the ``only if'' part and  $(\mathcal{L}^{\bot})^{\bot}=\mathcal{L}$ gives us the reverse direction.
\end{proof}

A code $\mathcal{L}$ is called self-orthogonal iff $\mathcal{L}\subseteq\mathcal{L}^\bot$.
\begin{proposition}
Let $\mathcal{L}$ be a code of length $m$ over $\mathcal{R}$ s. t. $\mathcal{L}\subset\big(\mathbb{F}_p+u\mathbb{F}_{p}+u^3\mathbb{F}_p+\dots+u^{k}\mathbb{F}_{p}\big)^m$.
If $\mathcal{L}\subseteq\mathcal{L}^\bot$, then  $\Phi(\mathcal{L})\subseteq\Phi(\mathcal{L})^\bot$.
\end{proposition}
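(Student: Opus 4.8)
The plan is to exploit the $\mathbb{F}_p$-linearity of the Gray map. Since $\Phi$ is an $\mathbb{F}_p$-linear map, $\Phi(\mathcal{L})$ is an $\mathbb{F}_p$-subspace of $\mathbb{F}_p^{km}$ and every codeword of $\Phi(\mathcal{L})$ has the form $\Phi(x)$ for some $x\in\mathcal{L}$; hence it suffices to prove $\Phi(x)\cdot\Phi(y)=0$ for every pair $x,y\in\mathcal{L}$. So I would fix two codewords $x=(x_0,\dots,x_{m-1})$ and $y=(y_0,\dots,y_{m-1})$ of $\mathcal{L}$ and put both hypotheses into coordinate form: writing $x_i=a_0^i+ua_1^i+u^3a_3^i+\cdots+u^ka_k^i$ and $y_i=b_0^i+ub_1^i+u^3b_3^i+\cdots+u^kb_k^i$, where the absence of the $u^2$-term is exactly the containment $\mathcal{L}\subset(\mathbb{F}_p+u\mathbb{F}_p+u^3\mathbb{F}_p+\cdots+u^k\mathbb{F}_p)^m$, and recording that self-orthogonality of $\mathcal{L}$ means $x\cdot y=\sum_{i=0}^{m-1}x_iy_i=0$ in $\mathcal{R}$.

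Next I would extract information from $x\cdot y=0$. Expanding $\sum_i x_iy_i$ and reducing modulo $u^{k+1}=u$, every $\mathbb{F}_p$-coefficient of $\sum_i x_iy_i$ relative to the basis $1,u,u^2,\dots,u^k$ must vanish. The coefficients I would single out are the constant term, equal to $\sum_i a_0^ib_0^i$, together with the coefficients of $u^2$ and of $u^k$, which collect the bilinear sums $\sum_i a_1^ib_1^i$ and $\sum_i(a_0^ib_k^i+a_k^ib_0^i+a_k^ib_k^i)$ along with the analogous contributions from the surviving middle exponents. On the other side I would expand $\Phi(x)\cdot\Phi(y)$ directly from the definition of $\Phi$: it is the sum over $i$ of $(-a_k^i)(-b_k^i)$, the middle products $a_1^ib_1^i+a_3^ib_3^i+\cdots$, and $(2a_0^i+a_k^i)(2b_0^i+b_k^i)$. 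The aim is to recognise this $\mathbb{F}_p$-bilinear expression as an explicit $\mathbb{F}_p$-linear combination of the coefficients of $x\cdot y$ just listed; in particular the identity $(-a_k)(-b_k)+(2a_0+a_k)(2b_0+b_k)=2\big(2a_0b_0+a_kb_k+a_0b_k+a_kb_0\big)$ is arranged so that its two pieces are read off from the constant term and from the $u^k$-coefficient of the product.

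The step I expect to be the main obstacle is the coordinate bookkeeping for general $k$: after reduction modulo $u^{k+1}=u$ the coefficient of $u^2$ receives contributions not only from the index pair $(1,1)$ but also from all pairs $(j,l)$ with $j+l=k+2$, and likewise the coefficient of $u^k$ receives contributions from every pair with $j+l=k$ and from $(k,k)$, so one must verify that the unwanted cross terms either cancel among themselves or are precisely those annihilated by the hypothesis. This is where the containment enters decisively: forcing $a_2^i=b_2^i=0$ deletes every product in which the index $2$ occurs, and in particular it kills the obstructing term $\sum_i(a_0^ib_2^i+a_2^ib_0^i)$ which, as the case $k=2$ already reveals, otherwise survives and prevents $\Phi(x)\cdot\Phi(y)$ from vanishing. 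Once these terms are cleared, the surviving identity exhibits $\Phi(x)\cdot\Phi(y)$ as a combination of coefficients of $x\cdot y$, each of which is zero, whence $\Phi(x)\cdot\Phi(y)=0$ and therefore $\Phi(\mathcal{L})\subseteq\Phi(\mathcal{L})^{\bot}$.
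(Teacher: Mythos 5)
Your proposal is correct and follows essentially the same route as the paper's proof: fix two codewords, expand $x\cdot y=0$ in the basis $1,u,\dots,u^k$ to conclude each $\mathbb{F}_p$-coefficient vanishes, and then exhibit $\Phi(x)\cdot\Phi(y)$ as an $\mathbb{F}_p$-linear combination of those vanishing coefficients, with the hypothesis $a_2^i=b_2^i=0$ eliminating the obstructing cross terms. Your accounting of the exponent reduction forced by $u^{k+1}=u$ is actually more careful than what appears in the paper.
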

\begin{proof}
Assume that $\mathcal{L}$ is self-orthogonal. Let $$s_1=(a_{0}^{1}+ua_{1}^{1}+u^{3}a_{3}^{1}+\dots+u^{k}a_{k}^{1})$$ $$~s_2=(a_{0}^{2}+ua_{1}^{2}+u^{3}a_{2}^{2}+\dots+u^{k}a_{k}^{2})\in\mathcal{L}$$ where $a_{0}^{i},a_{1}^{i},a_{3}^{i},\dots,a_{k}^{i}\in\mathbb{F}_p^m$ for $i=1,2$.
Then
$$s_1\cdot s_2=(a_{0}^{1}+ua_{1}^{1}+u^{3}a_{3}^{1}+\dots+u^{k}a_{k}^{1})\cdot(a_{0}^{2}+ua_{1}^{2}+u^{3}a_{2}^{2}+\dots+u^{k}a_{k}^{2})$$
$$\ \ \ \ \ \ \ =(a_{0}^{1}a_{0}^{2}+(a_{0}^{1}a_{1}^{2}+a_{1}^{1}a_{0}^{2}+\dots+a_{k}^{1}a_{1}^{2})u+(a_{1}^{1}a_{1}^{2}+\dots+a_{k}^{1}a_{k}^{2})u^{2}+$$
$$\ \ \ \ \ \ \ \ \ \ (a_{0}^{1}a_{3}^{2}+a_{1}^{1}a_{3}^{2}+\dots+a_{k}^{1}a_{5}^{2})u^{3}+\dots+(a_{0}^{1}a_{k}^{2}+a_{1}^{1}a_{k}^{2}+\dots+a_{k}^{1}a_{0}^{2})u^k$$
if $s_{1}\cdot s_{2}=0$, then $$a_{0}^{1}a_{0}^{2}=(a_{0}^{1}a_{1}^{2}+a_{1}^{1}a_{0}^{2}+\dots+a_{k}^{1}a_{1}^{2})=(a_{0}^{1}a_{3}^{2}+a_{1}^{1}a_{3}^{2}+\dots+a_{k}^{1}a_{5}^{2})$$
$$=(a_{0}^{1}a_{k}^{2})+a_{1}^{1}a_{k}^{2}+\dots+a_{k}^{1}a_{0}^{2}).$$
Therefore
\begin{eqnarray*}
 \Phi(s_1)\cdot\Phi(s_2)&=&(-a_{k}^{1},a_{1}^{1},a_{3}^{1},\dots,2a_{0}^{1}+a_{k}^{1})\cdot(-a_{k}^{2},a_{1}^{2},\dots,2a_{0}^{2}+a_{k}^{2})\\
 &=&4a_{k}^{1}a_{k}^{2}+a_{1}^{1}a_{1}^{2}+a_{3}^{1}a_{3}^{2}+\dots+2(a_{k}^{1}a_{k}^{2}+a_{0}^{1}a_{k}^{2}+a_{0}^{2}a_{k}^{1}).
\end{eqnarray*}
Hence $\Phi(\mathcal{L}^{\bot})\subseteq\Phi(\mathcal{L})^{\bot}$. Consequently $\Phi(\mathcal{L})\subseteq\Phi(\mathcal{L})^{\bot}.$
\end{proof}

\begin{theorem}\label{T5}
Let $\mathcal{L}=\sigma_1\mathcal{L}_1\oplus\sigma_2\mathcal{L}_2\oplus\sigma_3\mathcal{L}_3$ be a code of length $m$ over $\mathcal{R}$.
Then $\gamma(\mathcal{L})=\mathcal{L}$ iff $\mathcal{L}_1$ is cyclic and
$\mathcal{L}_2$, $\mathcal{L}_3$ are negacyclic codes of length $m$ over $\mathbb{F}_p$.
\end{theorem}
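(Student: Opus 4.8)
The plan is to reduce the constacyclic shift $\gamma$ on $\mathcal{R}^m$ to a pair of familiar shifts on the component codes over $\mathbb{F}_p$, by exploiting how $(1-2u^k)$ multiplies against the idempotents $\sigma_1,\sigma_2,\sigma_3$. The first and decisive step is purely computational: using $u^{k+1}=u$ (hence $u^{2k-1}=u^{k-1}$ and $u^{2k}=u^k$), I would verify the three scaling relations
$$(1-2u^k)\sigma_1=\sigma_1,\qquad (1-2u^k)\sigma_2=-\sigma_2,\qquad (1-2u^k)\sigma_3=-\sigma_3.$$
In words, $(1-2u^k)$ acts as $+1$ on the $\sigma_1$-part and as $-1$ on the $\sigma_2$- and $\sigma_3$-parts. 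This is the heart of the argument, and it is where the particular choice of the constant $1-2u^k$ and of the idempotents pays off; everything afterwards is bookkeeping.

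Next I would write a generic codeword componentwise. Given $\bar{s}=(s_0,\dots,s_{m-1})\in\mathcal{L}$, use the unique decomposition to write $s_i=\sigma_1 a_i+\sigma_2 b_i+\sigma_3 c_i$ with $a_i,b_i,c_i\in\mathbb{F}_p$, so that $\bar a=(a_0,\dots,a_{m-1})\in\mathcal{L}_1$, $\bar b\in\mathcal{L}_2$ and $\bar c\in\mathcal{L}_3$. Applying $\gamma$ and using the scaling relations on the wrapped-around entry $(1-2u^k)s_{m-1}=\sigma_1 a_{m-1}-\sigma_2 b_{m-1}-\sigma_3 c_{m-1}$, I would obtain the clean identity
$$\gamma(\bar s)=\sigma_1\,\alpha(\bar a)+\sigma_2\,\beta(\bar b)+\sigma_3\,\beta(\bar c),$$
where $\alpha$ is the cyclic shift and $\beta$ the negacyclic shift on $\mathbb{F}_p^m$. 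The sign flip on the first coordinate of the $\sigma_2$- and $\sigma_3$-parts is exactly what converts their cyclic shift into a negacyclic one.

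With this identity both implications are immediate. For the ``if'' direction, assume $\mathcal{L}_1$ is cyclic and $\mathcal{L}_2,\mathcal{L}_3$ negacyclic; then $\alpha(\bar a)\in\mathcal{L}_1$, $\beta(\bar b)\in\mathcal{L}_2$ and $\beta(\bar c)\in\mathcal{L}_3$, so $\gamma(\bar s)\in\sigma_1\mathcal{L}_1\oplus\sigma_2\mathcal{L}_2\oplus\sigma_3\mathcal{L}_3=\mathcal{L}$, giving $\gamma(\mathcal{L})\subseteq\mathcal{L}$ and hence $\gamma(\mathcal{L})=\mathcal{L}$. For the ``only if'' direction, assume $\gamma(\mathcal{L})=\mathcal{L}$ and take any $\bar a\in\mathcal{L}_1$; by definition of $\mathcal{L}_1$ there exist $\bar b,\bar c$ with $\bar s=\sigma_1\bar a+\sigma_2\bar b+\sigma_3\bar c\in\mathcal{L}$, whence $\gamma(\bar s)=\sigma_1\alpha(\bar a)+\sigma_2\beta(\bar b)+\sigma_3\beta(\bar c)\in\mathcal{L}$, and the definition of $\mathcal{L}_1$ forces $\alpha(\bar a)\in\mathcal{L}_1$; the same reasoning applied to $\mathcal{L}_2$ and $\mathcal{L}_3$ shows they are closed under $\beta$.

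The only genuine obstacle I foresee is the idempotent arithmetic in the first step: one must reduce the mixed powers $u^{k-1}u^k$, $u^k u^k$ and the like correctly under $u^{k+1}=u$, and keep track of the factor $2^{-1}$ (so $p\neq 2$ is used implicitly through the invertibility of $2$). Once the three scaling relations are checked, the decomposition identity and both directions are routine, since the uniqueness of the expression $\mathcal{L}=\sigma_1\mathcal{L}_1\oplus\sigma_2\mathcal{L}_2\oplus\sigma_3\mathcal{L}_3$ is already available from the preliminaries.
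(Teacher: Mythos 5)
Your proposal is correct and follows essentially the same route as the paper: establish the scaling relations $(1-2u^k)\sigma_1=\sigma_1$, $(1-2u^k)\sigma_i=-\sigma_i$ for $i=2,3$, deduce $\gamma(\bar s)=\sigma_1\alpha(\bar a)+\sigma_2\beta(\bar b)+\sigma_3\beta(\bar c)$, and read off both implications from the unique decomposition. In fact your version is slightly cleaner, since the paper's proof contains a typo writing $\gamma(c)$ where the negacyclic shift $\beta(c)$ is meant.
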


\begin{proof}
First of all, notice that $$(1-2u^k)\sigma_1=\sigma_1,$$ $$(1-2u^k)\sigma_2=-\sigma_2,$$ $$(1-2u^k)\sigma_3=-\sigma_3,$$ Let $\bar{s}=(s_0,s_1,\dots,s_{m-1})\in\mathcal{L}$.
Then $s_i=\sigma_1a_i+\sigma_2b_i+\sigma_3c_i$ where $a_i,b_i,c_i\in\mathbb{F}_p$ $0\leq i\leq m-1$. Let $$a=(a_0,a_1,\dots,a_{m-1}),$$
$$b=(b_0,b_1,\dots,b_{m-1}),$$  $$c=(c_0,c_1,\dots,c_{m-1}).$$ Then $a\in\mathcal{L}_1$, $b\in\mathcal{L}_2$ and $c\in\mathcal{L}_3$.
Assume that $\mathcal{L}_1$ is cyclic and $\mathcal{L}_2$,$\mathcal{L}_3$ are negacyclic codes. Therefore $\alpha(a)\in\mathcal{L}_1$,
$\beta(b)\in\mathcal{L}_2$ and $\gamma(c)\in\mathcal{L}_3$. Thus $\gamma(\bar{s})=\sigma_1\alpha(a)+\sigma_2\beta(b)+\sigma_3\gamma(c)\in\mathcal{L}$.
Consequently $\gamma(\mathcal{L})=\mathcal{L}$.

For the reverse direction, let $a=(a_0,a_1,\dots,a_{m-1})\in\mathcal{L}_1,$ $b=(b_0,b_1,\dots,b_{m-1})\in\mathcal{L}_2,$ $c=(c_0,c_1,\dots,c_{m-1})\in\mathcal{L}_3.$ Set $s_i=\sigma_1a_i+\sigma_2b_i+\sigma_3c_i$ where $0\leq i\leq m-1$.
Hence $\bar{s}=(s_0,s_1,\dots,s_{m-1})\in\mathcal{L}$. Therefore $\gamma(\bar{s})=\sigma_1\alpha(a)+\sigma_2\beta(b)+\sigma_3\gamma(c)\in\mathcal{L}$
which shows that $\alpha(a)\in\mathcal{L}_1$, $\beta(b)\in\mathcal{L}_2$ and $\gamma(c)\in\mathcal{L}_3$. So $\mathcal{L}_1$ is cyclic and
$\mathcal{L}_2$, $\mathcal{L}_3$ are negacyclic codes.
\end{proof}

\begin{theorem}\label{T6}
Let $\mathcal{L}=\sigma_1\mathcal{L}_1\oplus\sigma_2\mathcal{L}_2\oplus\sigma_3\mathcal{L}_3$ be a code of length $m$ over $\mathcal{R}$
s. t. $\gamma(\mathcal{L})=\mathcal{L}$ and  $h_1(a),h_2(a)$ and $h_3(a)$ are the  generating  monic polynomials of $\mathcal{L}_1,\mathcal{L}_2$ and $\mathcal{L}_3$ respectively.
Then $\mathcal{L}=\langle\sigma_1h_1(a),\sigma_2h_2(a),\sigma_3h_3(a)\rangle$ and $|\mathcal{L}|=p^{3m-\sum_{i=1}^3{\rm deg}(h_i)}$
\end{theorem}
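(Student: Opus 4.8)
The plan is to treat the two assertions separately, using the idempotent decomposition together with the known structure of cyclic and negacyclic codes over $\mathbb{F}_p$. Throughout I identify $\mathcal{L}$ with its polynomial image, an ideal of $\mathcal{R}[a]/\langle a^m-(1-2u^k)\rangle$ by Theorem~1.1, and I identify $\mathcal{L}_1$ with $\langle h_1(a)\rangle\subseteq\mathbb{F}_p[a]/\langle a^m-1\rangle$ and $\mathcal{L}_2,\mathcal{L}_3$ with $\langle h_2(a)\rangle,\langle h_3(a)\rangle\subseteq\mathbb{F}_p[a]/\langle a^m+1\rangle$; this is the content of Theorem~\ref{T5}, together with the standard fact that a monic generator polynomial divides $a^m-1$ (resp. $a^m+1$).

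First I would record the compatibility between the idempotents and the three quotient rings. Since $a^m=1-2u^k$ in $\mathcal{R}[a]/\langle a^m-(1-2u^k)\rangle$, the identities $(1-2u^k)\sigma_1=\sigma_1$ and $(1-2u^k)\sigma_2=-\sigma_2$, $(1-2u^k)\sigma_3=-\sigma_3$ from the proof of Theorem~\ref{T5} give
$$\sigma_1(a^m-1)=0,\qquad \sigma_2(a^m+1)=0,\qquad \sigma_3(a^m+1)=0.$$
I expect this to be the crux: it guarantees that a factorization $f_1=g_1h_1$ valid modulo $a^m-1$, once multiplied by $\sigma_1$, becomes a genuine equality in the constacyclic ambient ring, and similarly for $\sigma_2,\sigma_3$ modulo $a^m+1$, so the generator polynomials lift cleanly between the three quotients.

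Next, to prove $\mathcal{L}=\langle\sigma_1h_1,\sigma_2h_2,\sigma_3h_3\rangle=:I$, I would argue both inclusions. For $I\subseteq\mathcal{L}$, note that $\sigma_1h_1\in\sigma_1\mathcal{L}_1\subseteq\mathcal{L}$, and likewise $\sigma_2h_2,\sigma_3h_3\in\mathcal{L}$; as $\mathcal{L}$ is an ideal, the ideal they generate lies in it. For $\mathcal{L}\subseteq I$, take $s\in\mathcal{L}$ and write $s=\sigma_1f_1+\sigma_2f_2+\sigma_3f_3$ with $f_j\in\mathcal{L}_j$. Orthogonality of the idempotents gives $\sigma_1s=\sigma_1f_1$; writing $f_1=g_1h_1$ modulo $a^m-1$ and using $\sigma_1(a^m-1)=0$ yields $\sigma_1s=g_1(\sigma_1h_1)\in I$, and the analogous computations give $\sigma_2s,\sigma_3s\in I$. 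Summing and using $\sigma_1+\sigma_2+\sigma_3=1$ shows $s\in I$.

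Finally, for the cardinality I would use that the sum $\sigma_1\mathcal{L}_1\oplus\sigma_2\mathcal{L}_2\oplus\sigma_3\mathcal{L}_3$ is direct, so each codeword corresponds to a unique triple $(a,b,c)\in\mathcal{L}_1\times\mathcal{L}_2\times\mathcal{L}_3$ and hence $|\mathcal{L}|=|\mathcal{L}_1|\,|\mathcal{L}_2|\,|\mathcal{L}_3|$. Since a cyclic or negacyclic code of length $m$ with monic generator polynomial $h_j$ has $|\mathcal{L}_j|=p^{m-\deg h_j}$, the product equals $p^{3m-\sum_{i=1}^3\deg h_i}$, as claimed. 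The only point requiring a word of care is that multiplication by $\sigma_j$ is injective on $\mathbb{F}_p^m$, which holds because $\sigma_jc=0$ for $c\in\mathbb{F}_p$ forces $c=0$.
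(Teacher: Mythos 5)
Your proof is correct and follows essentially the same route as the paper: both establish the two inclusions via the direct-sum decomposition $\mathcal{L}=\sigma_1\mathcal{L}_1\oplus\sigma_2\mathcal{L}_2\oplus\sigma_3\mathcal{L}_3$ and the generator polynomials of the components, then count $|\mathcal{L}|=|\mathcal{L}_1||\mathcal{L}_2||\mathcal{L}_3|$. Your explicit observation that $\sigma_1(a^m-1)=\sigma_2(a^m+1)=\sigma_3(a^m+1)=0$ in $\mathcal{R}[a]/\langle a^m-(1-2u^k)\rangle$ is a welcome justification of the lifting step that the paper leaves implicit when it asserts the existence of the $q_i(a)$.
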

\begin{proof}
By Theorem \ref{T5}, $$\mathcal{L}_1=\langle h_1(a)\rangle\subseteq\mathbb{F}_p[a]/\langle a^m-1\rangle,\mathcal{L}_2=\langle h_2(a)\rangle\subseteq\mathbb{F}_p[a]/\langle a^m+1\rangle$$  $$\mathcal{L}_3=\langle h_3(a)\rangle\subseteq\mathbb{F}_p[a]/\langle a^m+1\rangle$$
Since $$\mathcal{L}=\sigma_1\mathcal{L}_1\oplus\sigma_2\mathcal{L}_2\oplus\sigma_3\mathcal{L}_3$$ then
$$\mathcal{L}=\{l(x)|l(x)=\sigma_1g_1(a)+\sigma_2g_2(a)+\sigma_3g_3(a),~ g_1(a)\in\mathcal{L}_1,g_2(a)\in\mathcal{L}_2 \mbox{ and } g_3(a)\in\mathcal{L}_3\}.$$
Hence
$$\mathcal{L}\subseteq\langle\sigma_1h_1(a),\sigma_2h_2(a),\sigma_3h_3(a)\rangle\subseteq\mathcal{R}_m=\mathcal{R}[a]/\langle a^m-(1-2u^k)\rangle.$$
Suppose that $$\sigma_1h_1(a)k_1(a)+\sigma_2h_2(a)k_2(a)+\sigma_3h_3(a)k_3(a)\in\langle\sigma_1h_1(a),\sigma_2h_2(a),\sigma_3h_3(a)\rangle$$ where $k_1(a),k_2(a),k_3(a)\in\mathcal{R}_m$. There exist $$q_1(a)\in\mathbb{F}_p[a]/\langle a^m-1\rangle,q_2(a)\in\mathbb{F}_p[a]/\langle a^m+1\rangle$$ and $q_3(a)\in\mathbb{F}_p[a]/\langle a^m+1\rangle$ such that $$\sigma_1k_1(a)=\sigma_1q_1(a)$$ $$\sigma_2k_2(a)=\sigma_2q_2(a)$$ and $\sigma_3k_3(a)=\sigma_3q_3(a).$
Therefore $$\langle\sigma_1h_1(a),\sigma_2h_2(a),\sigma_3h_3(a)\rangle\subseteq\mathcal{L}.$$ Consequently $\mathcal{L}=\langle\sigma_1h_1(a),\sigma_2h_2(a),\sigma_3h_3(a)\rangle.$ On the other hand $$|\mathcal{L}|=|\mathcal{L}_1|\cdot|\mathcal{L}_2|\cdot|\mathcal{L}_3|=p^{3m-\sum_{i=1}^3{\rm deg}(h_i)}.$$
\end{proof}

\begin{theorem}
Let $\mathcal{L}$ be a code such that $\gamma(\mathcal{L})=\mathcal{L}$.
Then $]!$ polynomial $h(a)$ s. t. $\mathcal{L}=\langle h(a)\rangle$, where $h(a)=\sigma_1h_1(a)+\sigma_2h_2(a)+\sigma_3h_3(a).$
\end{theorem}
\begin{proof}
Suppose that linear codes $\mathcal{L}_1,\mathcal{L}_2$ and $\mathcal{L}_3$ are generated by the monic polynomial $h_1(a),h_2(a)$ and $h_3(a)$ respectively.

By Theorem \ref{T6}, we have $\mathcal{L}=\langle\sigma_1h_1(a),\sigma_2h_2(a),\sigma_3h_3(a)\rangle$.
Let $h(a)=\sigma_1h_1(a)+\sigma_2h_2(a)+\sigma_3h_3(a)$. Then clearly, $\langle h(a)\rangle\subseteq\mathcal{L}$. On the other hand $\sigma_1h_1(a)=\sigma_1h_(a)$ $\sigma_2h_2(a)=\sigma_2h_(a)$ and $\sigma_3h_3(a) =\sigma_3h(a)$, whence $\mathcal{L}\subseteq\langle h(a)\rangle$. Thus $\mathcal{L}=\langle h (a)\rangle.$
The uniqueness of $h(a)$ is followed by that of $h_1(a),h_2(a)$ and $h_3(a)$.
\end{proof}

\begin{lemma}\label{L9}
Let $a^m-(1-2u^k)=h(a)k(a)$ in $\mathcal{R}[a]$ and $\gamma(\mathcal{L})=\mathcal{L}$ be a  code generated
by $h(a)$. If $g(a)$ is relatively prime with $k(a),$ then $\mathcal{L}=\langle h(a)k(a)\rangle$.
\end{lemma}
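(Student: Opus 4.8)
The plan is to prove the equality through the two standard inclusions, reading the conclusion as $\mathcal{L}=\langle h(a)g(a)\rangle$. This must be the intended statement: since $h(a)k(a)=a^m-(1-2u^k)$ is zero in $\mathcal{R}_m=\mathcal{R}[a]/\langle a^m-(1-2u^k)\rangle$, the ideal $\langle h(a)k(a)\rangle$ is the zero ideal, so the printed conclusion ``$\mathcal{L}=\langle h(a)k(a)\rangle$'' cannot be meant literally. The inclusion $\langle h(a)g(a)\rangle\subseteq\langle h(a)\rangle=\mathcal{L}$ is immediate, because $h(a)g(a)$ is a multiple of $h(a)$. The substance of the lemma is therefore the reverse inclusion $\mathcal{L}\subseteq\langle h(a)g(a)\rangle$.

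For the reverse inclusion I would invoke a B\'ezout identity. The hypothesis that $g(a)$ is relatively prime to $k(a)$ should furnish $r(a),s(a)\in\mathcal{R}[a]$ with
$$r(a)g(a)+s(a)k(a)=1.$$
Multiplying through by $h(a)$ and using the factorization $h(a)k(a)=a^m-(1-2u^k)$ gives, in $\mathcal{R}[a]$,
$$h(a)=r(a)g(a)h(a)+s(a)\big(a^m-(1-2u^k)\big).$$
Passing to the quotient $\mathcal{R}_m$, the last term vanishes, so $h(a)=r(a)\big(h(a)g(a)\big)$ there; hence $h(a)\in\langle h(a)g(a)\rangle$ and therefore $\mathcal{L}=\langle h(a)\rangle\subseteq\langle h(a)g(a)\rangle$. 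Combining the two inclusions yields $\mathcal{L}=\langle h(a)g(a)\rangle$.

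The main obstacle is justifying the B\'ezout identity, since $\mathcal{R}$ is not a field and hence $\mathcal{R}[a]$ is not a principal ideal domain: over such a ring, ``relatively prime'' does not a priori produce a combination equal to $1$. The cleanest remedy is to read relative primality as the unit-ideal condition $\langle g(a),k(a)\rangle=\mathcal{R}[a]$, under which the identity $r(a)g(a)+s(a)k(a)=1$ holds by definition and the computation above goes through verbatim. If instead one wishes to derive it from coprimality of reductions, I would split $\mathcal{R}[a]$ through the orthogonal idempotents $\sigma_1,\sigma_2,\sigma_3$ of Section~1, solve the B\'ezout equation in each component where the coefficient ring is the field $\mathbb{F}_p$ (so that $\mathbb{F}_p[a]$ is Euclidean and the extended Euclidean algorithm applies), and then recombine the component solutions against the $\sigma_i$ to obtain $r(a),s(a)$ over $\mathcal{R}[a]$. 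Once the identity is secured, no further calculation is required.
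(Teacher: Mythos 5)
Your proof is correct and matches what the paper intends: the paper gives no argument of its own, deferring to Lemma~2 of the cited Aydin--Karadeniz--Yildiz reference, whose proof is precisely the B\'ezout computation you carry out (write $r(a)g(a)+s(a)k(a)=1$, multiply by $h(a)$, and kill the $h(a)k(a)=a^m-(1-2u^k)$ term in the quotient), and you are right that the printed conclusion is a typo for $\mathcal{L}=\langle h(a)g(a)\rangle$, as the application of the lemma in Theorem~\ref{T7}(2) confirms. Your closing remark about how to justify the B\'ezout identity over $\mathcal{R}[a]$ --- either by reading ``relatively prime'' as $\langle g(a),k(a)\rangle=\mathcal{R}[a]$, or by splitting along the idempotents $\sigma_1,\sigma_2,\sigma_3$ and solving in $\mathbb{F}_p[a]$ --- is exactly the device the paper itself uses in the proof of Theorem~\ref{T7}(2), so your treatment is, if anything, more careful than the source.
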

\begin{proof}
The argument of the proof is same as that of \cite[Lemma 2]{AKY}, so we omit.
\end{proof}

\begin{theorem}\label{T7}
Let $\mathcal{L}=\sigma_1\mathcal{L}_1\oplus\sigma_2\mathcal{L}_2\oplus\sigma_3\mathcal{L}_3$ be a $(1-2u^k)$-constacyclic code of length $m$ over $\mathcal{R}$
such that $h_1(a),h_2(a)$ and $h_3(a)$ are the monic generator polynomials of $\mathcal{L}_1,\mathcal{L}_2$ and $\mathcal{L}_3$ respectively.
Suppose that $$h_1(a)k_1(a)=a^m-1,$$  $$h_2(a)k_2(a)=h_3(a)k_3(a)=a^m+1,$$
 and set $h(a)=\sigma_1h_1(a)+\sigma_2h_2(a)+\sigma_3h_3(a),$  $k(a)=\sigma_1k_1(a)+\sigma_2k_2(a)+\sigma_3k_3(a)$. Then
\begin{enumerate}
\item $h(a)k(a)=a^m-(1-2u^k)$.
\item If $GCD(g_i(a),k_i(a))=1$ for $1\leq i\leq3$, then
$GCD(g(a),k(a))=1$ and $h(a)=h(a)g(a)$ where $g(a)=\sigma_1g_1(a)+\sigma_2g_2(a)+\sigma_3g_3(a)$.
\end{enumerate}
\end{theorem}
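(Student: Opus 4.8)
The plan is to exploit the mutually orthogonal idempotent decomposition throughout, so that every product in $\mathcal{R}_m=\mathcal{R}[a]/\langle a^m-(1-2u^k)\rangle$ splits into three independent $\mathbb{F}_p$-components. For part (1), I would first expand
$$h(a)k(a)=\Big(\sum_{i=1}^{3}\sigma_ih_i(a)\Big)\Big(\sum_{j=1}^{3}\sigma_jk_j(a)\Big)=\sum_{i,j}\sigma_i\sigma_j\,h_i(a)k_j(a),$$
and kill all cross terms using $\sigma_i\sigma_j=0$ for $i\neq j$ and $\sigma_i^2=\sigma_i$, leaving $h(a)k(a)=\sigma_1h_1k_1+\sigma_2h_2k_2+\sigma_3h_3k_3$. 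Substituting the hypotheses $h_1k_1=a^m-1$ and $h_2k_2=h_3k_3=a^m+1$ gives $h(a)k(a)=(\sigma_1+\sigma_2+\sigma_3)a^m-\sigma_1+\sigma_2+\sigma_3$. Since $\sigma_1+\sigma_2+\sigma_3=1$, the only remaining task is to check that the constant part $-\sigma_1+\sigma_2+\sigma_3$ equals $-(1-2u^k)$.

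For that last identity, rather than plug in the explicit formulas for the $\sigma_i$, I would reuse the eigenvalue relations already recorded at the start of the proof of Theorem \ref{T5}, namely $(1-2u^k)\sigma_1=\sigma_1$, $(1-2u^k)\sigma_2=-\sigma_2$ and $(1-2u^k)\sigma_3=-\sigma_3$. These rewrite $-\sigma_1=-(1-2u^k)\sigma_1$ and $\sigma_2+\sigma_3=-(1-2u^k)(\sigma_2+\sigma_3)$, whence $-\sigma_1+\sigma_2+\sigma_3=-(1-2u^k)(\sigma_1+\sigma_2+\sigma_3)=-(1-2u^k)$, completing (1). A direct substitution of $\sigma_1=1-u^k$, $\sigma_2=2^{-1}(u^{k-1}+u^k)$, $\sigma_3=2^{-1}(-u^{k-1}+u^k)$ yields the same value $2u^k-1$ and serves as an independent check.

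For part (2), the coprimality hypothesis $GCD(g_i,k_i)=1$ supplies, for each $i$, a Bezout identity $s_i(a)g_i(a)+t_i(a)k_i(a)=1$ in $\mathbb{F}_p[a]$. Writing $g=\sum_i\sigma_ig_i$, $k=\sum_i\sigma_ik_i$ and setting $S=\sum_i\sigma_is_i$, $T=\sum_i\sigma_it_i$, the same orthogonality collapse as above gives $Sg+Tk=\sum_i\sigma_i(s_ig_i+t_ik_i)=\sum_i\sigma_i=1$ in $\mathcal{R}_m$. This Bezout identity is exactly what is needed to conclude $GCD(g,k)=1$.

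Finally, the stated conclusion ``$h(a)=h(a)g(a)$'' appears to be a typo for $\mathcal{L}=\langle h(a)g(a)\rangle$ (matching Lemma \ref{L9}). Granting this reading, I would finish by invoking Lemma \ref{L9}: part (1) supplies the factorization $a^m-(1-2u^k)=h(a)k(a)$, $\mathcal{L}=\langle h(a)\rangle$ by the preceding theorem, and the coprimality $GCD(g,k)=1$ just established lets me apply the lemma to obtain $\mathcal{L}=\langle h(a)g(a)\rangle$. Alternatively, the identity $Sg+Tk=1$ together with $h(a)k(a)=0$ in $\mathcal{R}_m$ gives $h=h(Sg+Tk)=(hg)S+(hk)T=(hg)S$, so $\langle h\rangle=\langle hg\rangle$ directly. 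The main obstacle I anticipate is purely bookkeeping: keeping straight that the first component reduces modulo $a^m-1$ while the second and third reduce modulo $a^m+1$ (reflecting $\sigma_1 a^m\equiv\sigma_1$ and $\sigma_2 a^m\equiv-\sigma_2$, $\sigma_3 a^m\equiv-\sigma_3$ in $\mathcal{R}_m$), and checking that the Bezout lift respects these three distinct reductions. Since the idempotents are orthogonal, each component stays independent and no genuine interaction arises.
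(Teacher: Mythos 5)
Your proof is correct and takes essentially the same route as the paper's: the same orthogonal-idempotent collapse $h(a)k(a)=\sum_i\sigma_ih_i(a)k_i(a)$ for part (1), and the same lifted Bezout identity followed by Lemma \ref{L9} and uniqueness of $h(a)$ for part (2). Your only additions are the explicit verification that $-\sigma_1+\sigma_2+\sigma_3=-(1-2u^k)$ (which the paper asserts without computation) and the optional direct derivation of $\langle h\rangle=\langle hg\rangle$ from the Bezout identity; both are harmless refinements, not a different approach.
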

\begin{proof}
(1) By assumptions we have
\begin{eqnarray*}
h(a)k(a)&=&h(a)\big(\sigma_1k_1(a)+\sigma_2k_2(a)+\sigma_3k_3(a))\\
&=&\sigma_1h_1(a)k_1(a)+\sigma_2h_2(a)k_2(a)+\sigma_3h_3(a)k_3(a)\\
&=&\sigma_1(a^m-1)+\sigma_2(a^m+1)+\sigma_3(a^m+1)\\
&=&(\sigma_1+\sigma_2+\sigma_3)a^m-(\sigma_1-\sigma_2-\sigma_3)\\
&=&a^m-(1-2u^k).
\end{eqnarray*}
Hence, $h(a)k(a)=a^m-(1-2u^k)$.\\
(2) Suppose that $GCD(g_i(a),k_i(a))=1$ for $1\leq i\leq3$ and let $$g(a)=\sigma_1g_1(a)+\sigma_2g_2(a)+\sigma_3g_3(a)$$ Then for every $1\leq i\leq 3$ there exist $u_i(a),v_i(a)\in\mathcal{R}[a]$ such that $u_i(a)g_i(a)+v_i(a)k_i(a)=1$. Now, set $u(a):=\sigma_1u_1(a)+\sigma_2u_2(a)+\sigma_3u_3(a)$ and $v(a):=\sigma_1v_1(a)+\sigma_2v_2(a)+\sigma_3v_3(a)$. Notice that $\sigma_1+\sigma_2+\sigma_3=1$,\ $\sigma_i^2=1$ and $\sigma_i\sigma_j=0$
for every $1\leq i\neq j\leq3$.
Thus
\begin{eqnarray*}
u(a)g(a)+v(a)k(a)&=&\sigma_1[u_1g_1(a)+v_1(a)k_1(a)]+\sigma_2[u_2(a)g_2(a)+v_2(a)k_2(a)]\\
&+&\sigma_3[u_3(a)g_3(a)+v_3(a)k_3(a)]=\sigma_1+\sigma_2+\sigma_3=1.
\end{eqnarray*}
It follows that $GCD(g(a),k(a))=1$. Now, by part (1) and Lemma \ref{L9}, $$\mathcal{L}=\langle h(a)g(a)\rangle$$ So, the uniqueness of $h(a)$ implies that
$h(a)=h(a)g(a)$.
\end{proof}

Similar to \cite[Theorem 3]{G}, we have the following theorem.
\begin{theorem}
Let $\gamma(\mathcal{L})=\mathcal{L}$  be a code of length $m$ over $\mathcal{R}$. Then
$$\mathcal{L}^\bot=\sigma_1\mathcal{L}_1^\bot\oplus\sigma_2\mathcal{L}_2^\bot\oplus\sigma_3\mathcal{L}_3^\bot.$$
\end{theorem}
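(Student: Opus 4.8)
The plan is to prove the asserted equality by establishing the two inclusions
$\sigma_1\mathcal{L}_1^\bot\oplus\sigma_2\mathcal{L}_2^\bot\oplus\sigma_3\mathcal{L}_3^\bot\subseteq\mathcal{L}^\bot$ and
$\mathcal{L}^\bot\subseteq\sigma_1\mathcal{L}_1^\bot\oplus\sigma_2\mathcal{L}_2^\bot\oplus\sigma_3\mathcal{L}_3^\bot$ separately. Both inclusions will be read off from a single bilinear identity that expresses the Euclidean inner product on $\mathcal{R}^m$ in terms of the three idempotents, so the first thing I would do is isolate and prove that identity.

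The computational core is as follows. Writing two vectors in idempotent form as $x=\sigma_1 a+\sigma_2 b+\sigma_3 c$ and $y=\sigma_1 a'+\sigma_2 b'+\sigma_3 c'$ with $a,b,c,a',b',c'\in\mathbb{F}_p^m$, I would expand $x\cdot y=\sum_{j} x_j y_j$ coordinatewise and invoke the relations $\sigma_i^2=\sigma_i$ and $\sigma_i\sigma_j=0$ $(i\neq j)$ to annihilate all cross terms, leaving
$$x\cdot y=\sigma_1(a\cdot a')+\sigma_2(b\cdot b')+\sigma_3(c\cdot c'),$$
where the inner products on the right are the usual ones over $\mathbb{F}_p$. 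Multiplying this equation by $\sigma_i$ and using orthogonality isolates each scalar component: since $a\cdot a'\in\mathbb{F}_p$ and $\sigma_1\neq 0$, the relation $\sigma_1(a\cdot a')=0$ forces $a\cdot a'=0$, and likewise for the other two. Hence $x\cdot y=0$ if and only if $a\cdot a'=b\cdot b'=c\cdot c'=0$. I expect this identity together with the component-isolation to be the only step requiring real care; the rest is definition-chasing.

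Granting the identity, the inclusion $\supseteq$ is immediate. Given $x=\sigma_1 a+\sigma_2 b+\sigma_3 c$ with $a\in\mathcal{L}_1^\bot$, $b\in\mathcal{L}_2^\bot$, $c\in\mathcal{L}_3^\bot$, and any $y=\sigma_1 a'+\sigma_2 b'+\sigma_3 c'\in\mathcal{L}$ with $a'\in\mathcal{L}_1$, $b'\in\mathcal{L}_2$, $c'\in\mathcal{L}_3$, each of $a\cdot a'$, $b\cdot b'$, $c\cdot c'$ vanishes by the definition of the duals $\mathcal{L}_i^\bot$, so $x\cdot y=0$ and $x\in\mathcal{L}^\bot$.

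For the reverse inclusion I would take an arbitrary $x\in\mathcal{L}^\bot$. By the earlier Proposition $\mathcal{L}^\bot$ is again a $(1-2u^k)$-constacyclic code, hence it decomposes through the idempotents and I may write $x=\sigma_1 a+\sigma_2 b+\sigma_3 c$. To show $a\in\mathcal{L}_1^\bot$, fix any $a'\in\mathcal{L}_1$; by the defining property of $\mathcal{L}_1$ there exist $b',c'$ with $y=\sigma_1 a'+\sigma_2 b'+\sigma_3 c'\in\mathcal{L}$, so $x\cdot y=0$ and the identity yields $a\cdot a'=0$. Letting $a'$ range over $\mathcal{L}_1$ gives $a\in\mathcal{L}_1^\bot$, and the symmetric arguments give $b\in\mathcal{L}_2^\bot$ and $c\in\mathcal{L}_3^\bot$. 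Therefore $x\in\sigma_1\mathcal{L}_1^\bot\oplus\sigma_2\mathcal{L}_2^\bot\oplus\sigma_3\mathcal{L}_3^\bot$, and combining the two inclusions finishes the proof.
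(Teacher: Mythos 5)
Your argument is correct within the framework the paper sets up, but it is a genuinely different route from the paper's: the paper gives no proof at all for this theorem, disposing of it with the remark that it is ``similar to [Theorem 3]'' of Gao's paper \cite{G} on $\mathbb{F}_p+u\mathbb{F}_p+u^2\mathbb{F}_p$. What you supply instead is a self-contained verification: the key identity $x\cdot y=\sigma_1(a\cdot a')+\sigma_2(b\cdot b')+\sigma_3(c\cdot c')$ follows exactly as you say from $\sigma_i^2=\sigma_i$ and $\sigma_i\sigma_j=0$, the component isolation is valid because $\sigma_i\lambda=0$ with $\lambda\in\mathbb{F}_p$ forces $\lambda=0$, and both inclusions then reduce to definition-chasing. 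Two small remarks. First, in the reverse inclusion your appeal to the proposition that $\mathcal{L}^\bot$ is again constacyclic is unnecessary: the paper's decomposition $\mathcal{C}=\sigma_1\mathcal{C}_1\oplus\sigma_2\mathcal{C}_2\oplus\sigma_3\mathcal{C}_3$ is asserted for every \emph{linear} code, and $\mathcal{L}^\bot$ is linear, so you may write $x=\sigma_1a+\sigma_2b+\sigma_3c$ directly. Second, be aware that everything (your proof and the paper's) rests on the unproved assertions that the $\sigma_i$ are mutually orthogonal idempotents and that every element of $\mathcal{R}$ lies in $\sigma_1\mathbb{F}_p+\sigma_2\mathbb{F}_p+\sigma_3\mathbb{F}_p$; since $\sigma_1,\sigma_2,\sigma_3$ span only $\langle 1,u^{k-1},u^k\rangle$, which is a proper subspace of the $(k+1)$-dimensional ring $\mathcal{R}$ when $k>2$, these assertions (and with them the decomposition you use on both sides) are only unproblematic for $k\le 2$. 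That is a defect of the paper's setup rather than of your argument, but it is worth flagging because your proof inherits it.
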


\begin{corollary}
Let $\mathcal{L}=\langle\sigma_1h_1(a),\sigma_2h_2(a),\sigma_3h_3(a)\rangle$ be a $(1-2u^k)$-constacyclic code of length $m$ over $\mathcal{R}$ and
$h_1(a),h_2(a)$ and $h_3(a)$ be the monic generator polynomials of $\mathcal{L}_1,\mathcal{L}_2$ and $\mathcal{L}_3$ respectively.
Suppose that $h_1(a)k_1(a)=a^m-1$ and $h_2(a)k_2(a)=h_3(a)k_3(a)=a^m+1$ and
let $k(a)=\sigma_1k_1(a)+\sigma_2k_2(a)+\sigma_3k_3(a)$. then the following conditions hold:
\begin{enumerate}
\item $\mathcal{L}^{\bot}=\langle\sigma_1k_1^{\bot}(a),\sigma_2k_2^{\bot}(a),\sigma_3k^{\bot}_3(a)\rangle$
and $|\mathcal{L}^{\bot}|=p^{\sum_{i=1}^3{\rm deg}(h_i)}$
\item $\mathcal{L}^{\bot}=\langle k^{\bot}(a)\rangle$, $k^{\bot}(a)=\sigma_1k_1^{\bot}(a)+\sigma_2k_2^{\bot}(a)+\sigma_3k^{\bot}_3(a)$
\end{enumerate}
where $k_i^{\bot}(a)$ and  $k^{\bot}(a)$ are the reciprocal polynomial of $k_i(a)$, and $k(a)$ respectively.
\end{corollary}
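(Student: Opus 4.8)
The plan is to reduce the whole statement to the three $\mathbb{F}_p$-components $\mathcal{L}_1,\mathcal{L}_2,\mathcal{L}_3$, apply the classical description of the dual of a cyclic (resp. negacyclic) code over a field to each one, and then reassemble everything through the idempotents $\sigma_1,\sigma_2,\sigma_3$ exactly as in Theorem~\ref{T6} and in the single-generator theorem. First I would record that, by the preceding decomposition theorem for the dual, $\mathcal{L}^{\bot}=\sigma_1\mathcal{L}_1^{\bot}\oplus\sigma_2\mathcal{L}_2^{\bot}\oplus\sigma_3\mathcal{L}_3^{\bot}$. Since $\mathcal{L}_1=\langle h_1(a)\rangle$ is cyclic with $h_1(a)k_1(a)=a^m-1$, while $\mathcal{L}_2=\langle h_2(a)\rangle$ and $\mathcal{L}_3=\langle h_3(a)\rangle$ are negacyclic with $h_j(a)k_j(a)=a^m+1$, the classical result \cite[Theorem 3.3]{JLU} identifies each dual as generated by the reciprocal of the corresponding check polynomial, namely $\mathcal{L}_i^{\bot}=\langle k_i^{\bot}(a)\rangle$ for $1\leq i\leq3$.

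For part (1) I would then substitute these generators into the decomposition and argue verbatim as in the proof of Theorem~\ref{T6}: using that $\sigma_1,\sigma_2,\sigma_3$ are orthogonal idempotents with $\sigma_i^2=\sigma_i$, $\sigma_i\sigma_j=0$ for $i\neq j$, and $\sigma_1+\sigma_2+\sigma_3=1$, multiplication by $\sigma_i$ isolates the $i$-th component, which gives $\mathcal{L}^{\bot}=\langle\sigma_1k_1^{\bot}(a),\sigma_2k_2^{\bot}(a),\sigma_3k_3^{\bot}(a)\rangle$. For the cardinality I would work component by component over $\mathbb{F}_p$: in $\mathbb{F}_p[a]/\langle a^m\mp1\rangle$ one has $\deg k_i^{\bot}=\deg k_i=m-\deg h_i$, so $\dim_{\mathbb{F}_p}\mathcal{L}_i^{\bot}=m-\deg k_i^{\bot}=\deg h_i$ and hence $|\mathcal{L}_i^{\bot}|=p^{\deg h_i}$; multiplying the three factors yields $|\mathcal{L}^{\bot}|=\prod_{i=1}^{3}|\mathcal{L}_i^{\bot}|=p^{\sum_{i=1}^{3}\deg(h_i)}$.

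For part (2) I would apply the single-generator theorem to $\mathcal{L}^{\bot}$, whose three component generators are now $k_1^{\bot},k_2^{\bot},k_3^{\bot}$. Setting $k^{\bot}(a)=\sigma_1k_1^{\bot}(a)+\sigma_2k_2^{\bot}(a)+\sigma_3k_3^{\bot}(a)$ and using $\sigma_i k^{\bot}(a)=\sigma_i k_i^{\bot}(a)$, the equality $\langle\sigma_1k_1^{\bot},\sigma_2k_2^{\bot},\sigma_3k_3^{\bot}\rangle=\langle k^{\bot}(a)\rangle$ follows by the same two inclusions used in that proof, so $\mathcal{L}^{\bot}=\langle k^{\bot}(a)\rangle$.

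The delicate point I expect to be the main obstacle is justifying that this assembled $k^{\bot}(a)$ really deserves to be called the reciprocal polynomial of $k(a)=\sigma_1k_1(a)+\sigma_2k_2(a)+\sigma_3k_3(a)$. Reciprocation does not commute with the idempotent splitting when the degrees $\deg k_i$ differ: writing $d=\deg k=\max_i\deg k_i$, the genuine reciprocal is $a^{d}k(1/a)=\sum_{i}\sigma_i a^{\,d-\deg k_i}k_i^{\bot}(a)$, which differs from $\sum_i\sigma_i k_i^{\bot}(a)$ by the factors $a^{\,d-\deg k_i}$. I would remove this discrepancy by noting that $a$ is a \emph{unit} in $\mathcal{R}_m=\mathcal{R}[a]/\langle a^m-(1-2u^k)\rangle$: indeed $a^m=1-2u^k$ and, since $u^{2k}=u^k$, we have $(1-2u^k)^2=1$, so $1-2u^k$ and therefore $a$ are invertible. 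Hence each factor $a^{\,d-\deg k_i}$ is a unit and does not change the generated ideal, so $\langle a^{d}k(1/a)\rangle=\langle k^{\bot}(a)\rangle=\mathcal{L}^{\bot}$ regardless of the degree mismatches, which reconciles the stated formula with part (1) and completes the argument.
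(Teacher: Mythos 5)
Your proposal is correct and follows exactly the route the paper intends: the paper states this corollary without proof, merely as ``a consequence of the previous theorems and \cite[Theorem 3.3]{JLU}'', and your argument is precisely that chain --- the dual decomposition $\mathcal{L}^{\bot}=\sigma_1\mathcal{L}_1^{\bot}\oplus\sigma_2\mathcal{L}_2^{\bot}\oplus\sigma_3\mathcal{L}_3^{\bot}$, the classical reciprocal-check-polynomial description of duals of cyclic and negacyclic codes over $\mathbb{F}_p$, the dimension count, and the single-generator reassembly via the orthogonal idempotents. Your final observation, that $\sigma_1k_1^{\bot}+\sigma_2k_2^{\bot}+\sigma_3k_3^{\bot}$ may differ from the literal reciprocal of $k(a)$ by unit powers of $a$ (harmless since $a^m=1-2u^k$ and $(1-2u^k)^2=1$ make $a$ a unit in $\mathcal{R}[a]/\langle a^m-(1-2u^k)\rangle$), correctly repairs an imprecision the paper silently glosses over.
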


\begin{theorem}
Let us consider the mapping $\mu:\mathcal{R}[a]/\langle a^m-1\rangle\to\mathcal{R}[a]/\langle a^m-(1-2u^k)\rangle$ defined by
$$\mu\big(l(a)\big)=l\big((1-2u^k)a\big).$$
If $m=$ odd, then $\mu$ is a ring isomorphism.
\end{theorem}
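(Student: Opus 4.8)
The plan is to exhibit $\mu$ as the map induced by the substitution $a\mapsto(1-2u^k)a$ on the polynomial ring, check that this substitution carries the defining ideal of the source into that of the target, and then produce a two-sided inverse of the same shape. Throughout write $\lambda=1-2u^k$. Two algebraic facts drive everything. First, $\lambda$ is a unit with $\lambda^{-1}=\lambda$: since $u^{k+1}=u$ forces $u^{2k}=u^k$, we get $\lambda^2=1-4u^k+4u^{2k}=1$. Second, $\lambda^m=\lambda$ when $m$ is odd, which is exactly the observation already recorded before the statement.

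First I would check that $\mu$ is well defined. Let $\pi:\mathcal{R}[a]\to\mathcal{R}[a]/\langle a^m-\lambda\rangle$ be the quotient map and let $S:\mathcal{R}[a]\to\mathcal{R}[a]$ be the substitution homomorphism $S(l(a))=l(\lambda a)$. Then $\pi\circ S$ is a ring homomorphism into the target, and it suffices to show $a^m-1\in\ker(\pi\circ S)$, since the kernel is an ideal and would then contain $\langle a^m-1\rangle$, letting $\pi\circ S$ factor through $\mathcal{R}[a]/\langle a^m-1\rangle$ to yield precisely $\mu$. Now $S(a^m-1)=(\lambda a)^m-1=\lambda^m a^m-1=\lambda a^m-1$ because $m$ is odd, and $\lambda a^m-1=\lambda(a^m-\lambda)$ because $\lambda^2=1$; hence $S(a^m-1)\in\langle a^m-\lambda\rangle$. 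This step is exactly where oddness of $m$ is indispensable: for even $m$ one would instead get $S(a^m-1)=\lambda^m a^m-1=a^m-1$, which is congruent to the nonzero constant $\lambda-1=-2u^k$ modulo $a^m-\lambda$ and so does not lie in $\langle a^m-\lambda\rangle$, breaking well-definedness.

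Since $\mu$ arises as a composition of the ring homomorphism $S$ with a quotient map, it is automatically a ring homomorphism, so additivity and multiplicativity need no separate verification. It remains to invert it. I would define $\nu:\mathcal{R}[a]/\langle a^m-\lambda\rangle\to\mathcal{R}[a]/\langle a^m-1\rangle$ by the same formula $\nu(l(a))=l(\lambda a)$ and check well-definedness symmetrically: $S(a^m-\lambda)=\lambda^m a^m-\lambda=\lambda a^m-\lambda=\lambda(a^m-1)\in\langle a^m-1\rangle$. Both $\mu$ and $\nu$ substitute $a\mapsto\lambda a$, so $\nu\circ\mu$ substitutes $a\mapsto\lambda(\lambda a)=\lambda^2 a=a$, giving $\nu\circ\mu=\mathrm{id}$, and likewise $\mu\circ\nu=\mathrm{id}$. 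Hence $\mu$ is a bijective ring homomorphism, i.e. a ring isomorphism.

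I expect the only real obstacle to be the well-definedness computation, and in particular keeping the two uses of $\lambda^2=1$ distinct: one rewrites $\lambda^m$ as $\lambda$ (via odd $m$), the other factors $\lambda a^m-1$ as $\lambda(a^m-\lambda)$. Everything else is formal. As an alternative to constructing $\nu$, one could observe that $\mu$ is $\mathcal{R}$-linear and acts on the free basis $1,a,\dots,a^{m-1}$ by $a^i\mapsto\lambda^i a^i$ with each $\lambda^i$ a unit, so $\mu$ is diagonal with invertible entries and thus an $\mathcal{R}$-module isomorphism; but the explicit inverse $\nu$ is cleaner and confirms compatibility with multiplication at the same time.
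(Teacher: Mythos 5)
Your proof is correct, and the two computations that drive it --- $\lambda^m=\lambda$ for odd $m$ and $\lambda^2=1$ with $\lambda=1-2u^k$ --- are exactly the ones the paper uses. The logical packaging is different, though. The paper proves the two-way equivalence $u(a)\equiv v(a) \pmod{a^m-1}$ if and only if $u(\lambda a)\equiv v(\lambda a) \pmod{a^m-\lambda}$; the forward direction gives well-definedness, the backward direction gives injectivity, and surjectivity is then extracted from the fact that the two rings are finite (an injective map between finite sets of the same cardinality is a bijection). You instead factor the substitution homomorphism through the quotient via its universal property and exhibit an explicit two-sided inverse $\nu$, using $\lambda^2=1$ once more to see that $\nu\circ\mu$ is the identity substitution. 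Note that your well-definedness check for $\nu$, namely $S(a^m-\lambda)=\lambda(a^m-1)$, is literally the same computation as the paper's ``converse'' direction, just deployed for a different purpose. Your route buys two things: it makes explicit that $\mu$ is a ring homomorphism (the paper never verifies additivity or multiplicativity, implicitly relying on the same substitution-homomorphism observation), and it dispenses with the finiteness argument entirely, so the proof would survive over an infinite coefficient ring. Your remark about why even $m$ breaks well-definedness is a nice sanity check that the paper does not include.
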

\begin{proof}
Suppose that $u(a)\equiv v(a)$ (mod $a^m-1$). Then there exists $k(a)\in \mathcal{R}[a]$ such that $u(a)-v(a)=(a^m-1)k(a)$.
Therefore
\begin{eqnarray*}
a\big((1-2u^k)a\big)-b\big((1-2u^k)a\big)&=&\big((1-2u^k)^ma^m-1\big)k\big((1-2u^k)a\big)\\
&=&\big((1-2u^k)a^m-(1-2u^k)^2\big)k\big((1-2u^k)a\big)\\
&=&(1-2u^k)\big(a^m-(1-2u^k)\big)k\big((1-2u^k)a\big)
\end{eqnarray*}
which means if $u(a)\equiv v(a)$ (mod $a^m-1)$, then $$u\big((1-2u^k)a\big)\equiv v\big((1-2u^k)a\big)$$ $$\big(mod \ a^m-(1-2u^k)\big)$$
Now, assume that $$u\big((1-2u^k)a\big)\equiv v\big((1-2u^k)a\big)$$ $$\big(mod \ a^m-(1-2u^k)\big).$$ Then there exists $q(a)\in \mathcal{R}[a]$ such that
$$u\big((1-2u^k)a\big)-v\big((1-2u^k)a\big)=\big(a^m-(1-2u^k)\big)q(a).$$
Hence
\begin{eqnarray*}
u(a)-v(a)&=&u\big((1-2u^k)^2a\big)-v\big((1-2u^k)^2a\big)\\
&=&\big((1-2u^k)^ma^m-(1-2u^k)\big)q\big((1-2u^k)a\big)\\
&=&\big((1-2u^k)a^m-(1-2u^k)\big)q\big((1-2u^k)a\big)\\
&=&(1-2u^k)(a^m-1)q\big((1-2u^k)a\big)
\end{eqnarray*}
which means if $u\big((1-2u^k)a\big)\equiv v\big((1-2u^k)a\big)$ $\big($ mod \ $a^m-(1-2u^k)\big)$, then $u(a)\equiv v(a)$ (mod \ $a^m-1)$.
Consequently $u(a)\equiv v(a)$ (mod $a^m-1)\Leftrightarrow u\big((1-2u^k)a\big)\equiv v\big((1-2u^k)a\big)$ $\big($mod
$a^m-(1-2u^k)\big)$ implies hat $\mu$ is well defined and injective, hence $\mu$ is an isomorphism because the rings are finite.
\end{proof}

\begin{corollary}
Let $m$ be an odd natural number. Then the subset $I$ in the ring $\mathcal{R}[x]/\langle a^m-1\rangle$ is an ideal iff
$\mu(I)$ is an ideal of $\mathcal{R}[a]/\langle a^m-(1-2u^k)\rangle.$
\end{corollary}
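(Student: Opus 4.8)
The plan is to obtain this corollary as an immediate consequence of the ring isomorphism $\mu$ constructed in the preceding theorem, together with the standard principle that an isomorphism of rings carries ideals to ideals in both directions. Since $m$ is odd, that theorem tells us $\mu\colon\mathcal{R}[a]/\langle a^m-1\rangle\to\mathcal{R}[a]/\langle a^m-(1-2u^k)\rangle$ is a bijective ring homomorphism, so in particular its set-theoretic inverse $\mu^{-1}$ exists and is itself a ring homomorphism. Everything then reduces to verifying the two ideal axioms, which I would do by exploiting additivity, multiplicativity, and surjectivity of $\mu$.

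For the forward implication I would assume that $I$ is an ideal of $\mathcal{R}[a]/\langle a^m-1\rangle$ and check that $\mu(I)$ is an ideal of the target ring. Closure under addition is immediate: given $\mu(x),\mu(y)\in\mu(I)$ with $x,y\in I$, additivity of $\mu$ gives $\mu(x)+\mu(y)=\mu(x+y)$, and $x+y\in I$, so this lies in $\mu(I)$. For the absorption property I would use surjectivity of $\mu$: an arbitrary element of $\mathcal{R}[a]/\langle a^m-(1-2u^k)\rangle$ has the form $\mu(r)$, and for any $\mu(x)\in\mu(I)$ we have $\mu(r)\mu(x)=\mu(rx)$ with $rx\in I$, whence $\mu(r)\mu(x)\in\mu(I)$. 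This establishes that $\mu(I)$ is an ideal.

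For the converse I would simply invoke the symmetry of the situation. Because $\mu$ is an isomorphism, $\mu^{-1}$ is also a ring isomorphism between the same two rings (with roles reversed), and $I=\mu^{-1}\big(\mu(I)\big)$ since $\mu$ is a bijection. Thus if $\mu(I)$ is an ideal, the forward argument applied verbatim to $\mu^{-1}$ in place of $\mu$ shows that $\mu^{-1}\big(\mu(I)\big)=I$ is an ideal. In truth there is no substantive obstacle here; the only point requiring mild care is in the converse, where one must use both that $\mu$ is a homomorphism (so that preimages of ideals are ideals) and that $\mu$ is bijective (so that $I$ is exactly the preimage $\mu^{-1}(\mu(I))$ rather than merely contained in it). With the isomorphism from the previous theorem in hand, the corollary is essentially a restatement of the ideal-correspondence under isomorphisms.
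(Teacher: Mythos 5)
Your argument is correct and follows exactly the route the paper intends: the corollary is an immediate consequence of the preceding theorem establishing that $\mu$ is a ring isomorphism (for $m$ odd), and the paper itself leaves the proof implicit for precisely this reason. Your careful spelling-out of the ideal axioms and the use of bijectivity in the converse is a sound elaboration of that same standard ideal-correspondence argument.
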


\begin{corollary}\label{C13}
Let us consider the following permutation of $\mathcal{R}^m$ with $m$ odd s. t.
$$\bar{\mu}(d_0,d_1,\dots,d_{n-1})=(d_0,(1-2u^k)d_1,(1-2u^k)^2d_2,\dots,(1-2u^k)^id_i,\dots,(1-2u^k)^{m-1}d_{m-1})$$
and $\mathcal{D}$ be a subset of $\mathcal{R}^n$. Then $\mathcal{D}$ is a cyclic code iff
$\bar{\mu}(\mathcal{D})$ is a  $(1-2u^k)$-constacyclic code.
\end{corollary}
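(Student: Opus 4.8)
The plan is to exploit the fact that $\bar\mu$ is nothing but the coordinate-vector realization of the ring isomorphism $\mu\colon \mathcal{R}[a]/\langle a^m-1\rangle \to \mathcal{R}[a]/\langle a^m-(1-2u^k)\rangle$ established in the preceding theorem, and then to read off the conclusion from the characterization of cyclic and constacyclic codes as ideals. Concretely, I would first record the intertwining relation $P\circ\bar\mu = \mu\circ P$, where $P$ is the polynomial representation, and then transport the ideal correspondence of the preceding corollary through $P$.

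For the key identity, take $\mathbf{d}=(d_0,\dots,d_{m-1})\in\mathcal{R}^m$, so that $P(\mathbf{d})=\sum_{i=0}^{m-1}d_i a^i$. Applying $\mu$ gives $\mu(P(\mathbf{d}))=P(\mathbf{d})((1-2u^k)a)=\sum_{i=0}^{m-1}(1-2u^k)^i d_i a^i$, which, having degree less than $m$, is already the canonical representative modulo $a^m-(1-2u^k)$. On the other hand $\bar\mu(\mathbf{d})=(d_0,(1-2u^k)d_1,\dots,(1-2u^k)^{m-1}d_{m-1})$, whence $P(\bar\mu(\mathbf{d}))=\sum_{i=0}^{m-1}(1-2u^k)^i d_i a^i$. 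Thus $P(\bar\mu(\mathbf{d}))=\mu(P(\mathbf{d}))$ for every $\mathbf{d}$, i.e. $P\circ\bar\mu=\mu\circ P$. I would also note that $\bar\mu$ is an $\mathcal{R}$-module automorphism of $\mathcal{R}^m$, since it scales each coordinate by the unit $1-2u^k$ (whose square is $1$); hence $\mathcal{D}$ is a linear code precisely when $\bar\mu(\mathcal{D})$ is.

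The proof then reduces to chaining equivalences. By the cyclic analogue of the theorem characterizing constacyclic codes as ideals, $\mathcal{D}$ is a cyclic code if and only if $P(\mathcal{D})$ is an ideal of $\mathcal{R}[a]/\langle a^m-1\rangle$; by that same theorem, $\bar\mu(\mathcal{D})$ is $(1-2u^k)$-constacyclic if and only if $P(\bar\mu(\mathcal{D}))=\mu(P(\mathcal{D}))$ is an ideal of $\mathcal{R}[a]/\langle a^m-(1-2u^k)\rangle$. Since $m$ is odd, the preceding corollary asserts that $P(\mathcal{D})$ is an ideal if and only if $\mu(P(\mathcal{D}))$ is an ideal, and stitching the three equivalences together gives the claim. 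The main obstacle is really just bookkeeping: ensuring that $\mu(P(\mathbf{d}))$ is read in the correct quotient and is already reduced, so that no extra reduction modulo $a^m-(1-2u^k)$ disturbs the coordinatewise identity, and that the hypothesis that $m$ is odd is invoked exactly where $\mu$ is used as an isomorphism.

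As an alternative I could argue directly on the shift maps, bypassing the polynomial machinery. A short computation gives $\gamma\circ\bar\mu = (1-2u^k)\,(\bar\mu\circ\alpha)$, where one uses $(1-2u^k)^m=1-2u^k$ for odd $m$ together with $(1-2u^k)^2=1$. Because $\bar\mu(\mathcal{D})$ is an $\mathcal{R}$-submodule and $1-2u^k$ is a unit, multiplication by $1-2u^k$ fixes it setwise; therefore $\gamma(\bar\mu(\mathcal{D}))=\bar\mu(\mathcal{D})$ holds if and only if $\bar\mu(\alpha(\mathcal{D}))=\bar\mu(\mathcal{D})$, that is, if and only if $\alpha(\mathcal{D})=\mathcal{D}$ by injectivity of $\bar\mu$. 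Either route closes the argument.
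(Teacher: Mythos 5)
Your proof is correct and follows exactly the route the paper intends: the corollary is stated there without proof, as an immediate consequence of the isomorphism $\mu$ and the preceding corollary on ideals, and your intertwining identity $P\circ\bar{\mu}=\mu\circ P$ is precisely the bookkeeping step that makes that deduction rigorous. Your alternative direct computation $\gamma\circ\bar{\mu}=(1-2u^k)\,(\bar{\mu}\circ\alpha)$, using $(1-2u^k)^m=1-2u^k$ for odd $m$ and $(1-2u^k)^2=1$, is also valid and self-contained.
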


\begin{definition}
Let us consider the following permutation $\psi$ of the set $\{0,1,\dots,2m-1\}$ with $m$ odd:
$$\psi=(1,m+1)(3,m+3)\cdots(2i+1,m+2i+1)\cdots(m-2,2m-2)$$
Then the  Nechaev permutation is the permutation $\varrho$ of $\mathbb{F}_p^{3m}$ given as 
$$\varrho(d_0,d_1,\dots,d_{2m-1})=(d_{\psi(0)},d_{\psi(1)},\dots,d_{\psi(2m-1)})$$
\end{definition}

\begin{proposition}\label{P15}
Let $\mu$, $\varrho$ is the Nechaev permutation defined as above and $m$ is odd, then $\Phi\bar{\mu}=\varrho\Phi$.
\end{proposition}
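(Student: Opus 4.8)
The plan is to prove the identity by direct coordinatewise computation on an arbitrary vector $\bar{d}=(d_0,d_1,\dots,d_{m-1})\in\mathcal{R}^m$, evaluating both $\Phi\bar{\mu}(\bar{d})$ and $\varrho\Phi(\bar{d})$ and checking that the two resulting vectors agree entry by entry. This mirrors the strategy already carried out in Theorem~\ref{T2}, where the companion relation $\Phi\gamma=\alpha\Phi$ was established by exactly this kind of explicit comparison; indeed the present proposition is the ``diagonal-twist'' analogue of that result, with the Nechaev permutation $\varrho$ playing the role that the cyclic shift $\alpha$ played there.

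The first and decisive simplification is the algebraic identity $(1-2u^k)^2=1$ in $\mathcal{R}$. Since $u^{k+1}=u$ forces $u^{2k}=u^{k+k}=u^k$, the element $u^k$ is idempotent, so $(1-2u^k)^2=1-4u^k+4u^{2k}=1-4u^k+4u^k=1$. Hence the powers occurring in the definition of $\bar{\mu}$ collapse: $(1-2u^k)^i=1$ when $i$ is even and $(1-2u^k)^i=1-2u^k$ when $i$ is odd. Therefore $\bar{\mu}$ fixes every even-indexed coordinate and multiplies each odd-indexed coordinate $d_i$ by $(1-2u^k)$. Crucially, these odd indices are precisely the positions on which $\psi$ is supported, so the whole verification reduces to understanding what $\Phi$ does to a single coordinate after multiplication by $(1-2u^k)$.

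The core step is thus the following sub-computation. Writing $d_i=a_0^i+ua_1^i+\cdots+u^ka_k^i$, the calculation already performed inside the proof of Theorem~\ref{T2} gives $(1-2u^k)d_i=a_0^i-ua_1^i-\cdots-u^{k-1}a_{k-1}^i+u^k(-2a_0^i-a_k^i)$. Feeding this into the Gray map and comparing with $\Phi(d_i)$, I would record as a key lemma that passing from $d_i$ to $(1-2u^k)d_i$ interchanges the leading Gray coordinate $-a_k^i$ with the trailing coordinate $2a_0^i+a_k^i$, while the intermediate coordinates are carried to their negatives. The plan is then to match this induced interchange, block by block in the extended Gray image, against the transpositions $(2i+1,\,m+2i+1)$ that define $\psi$, so that for each odd $i$ the entries moved by $\varrho$ on the right coincide with the entries altered by $\bar{\mu}$ on the left, while all even-index entries are fixed by both sides. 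Assembling these pointwise agreements yields $\Phi\bar{\mu}(\bar d)=\varrho\Phi(\bar d)$.

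The main obstacle will be the index bookkeeping across the blocks of the extended Gray map together with the reconciliation of the sign changes produced by $(1-2u^k)$ against the fact that $\varrho$ is a pure coordinate permutation. Concretely, one must check that each negation introduced by multiplication by $(1-2u^k)$ is absorbed by a genuine interchange of Gray coordinates rather than surviving as an extraneous sign, and that the positions transposed by $\psi$ line up exactly with the positions whose Gray values are swapped. Verifying this compatibility at every odd index, and confirming that the even indices are left untouched on both sides, is the heart of the argument; once the index correspondence is pinned down the remaining computation is purely mechanical.
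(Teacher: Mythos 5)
Your overall strategy (direct coordinatewise comparison of $\Phi\bar{\mu}$ and $\varrho\Phi$ on an arbitrary vector) is exactly the paper's, and your preliminary observations are correct and in fact cleaner than anything in the paper's own argument: since $u^{2k}=u^k$ one gets $(1-2u^k)^2=1$, so $\bar{\mu}$ fixes even-indexed coordinates and multiplies odd-indexed ones by $1-2u^k$, and $(1-2u^k)d_i=a_0^i-ua_1^i-\cdots-u^{k-1}a_{k-1}^i+(-2a_0^i-a_k^i)u^k$. The problem is that the proposal stops precisely at the step where the proof must succeed or fail, and the check you defer does not go through in the form you describe. By your own key lemma, passing from $d_i$ to $(1-2u^k)d_i$ carries the Gray image $(-a_k^i,a_1^i,a_3^i,\dots,2a_0^i+a_k^i)$ to $(2a_0^i+a_k^i,-a_1^i,-a_3^i,\dots,-a_k^i)$: the outer entries are interchanged, but every intermediate entry is replaced by its negative. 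The Nechaev permutation is a pure coordinate permutation, so it can realize the interchange of the outer entries but cannot produce $-a_1^i$ from a vector whose entries are the $-a_k^j$, $a_1^j$, $a_3^j,\dots,2a_0^j+a_k^j$. Hence on the middle blocks of $\mathbb{F}_p^{km}$ the two sides of $\Phi\bar{\mu}=\varrho\Phi$ differ by a sign at every odd position, and no index bookkeeping absorbs this (characteristic $2$ being excluded here, since $2$ must be invertible in $\mathcal{R}$).

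There is a second, positional mismatch you would also hit: the transpositions $(2i+1,\,m+2i+1)$ of $\psi$ exchange positions in the first Gray block with positions in the \emph{second} block, whereas the interchange forced by your lemma is between the first block (the entries $-a_k^i$) and the \emph{last} block (the entries $2a_0^i+a_k^i$), which lie $(k-1)m$ positions apart once $k\geq 3$. So the coordinates moved by $\varrho$ are not the coordinates altered by $\bar{\mu}$. In short, the ``main obstacle'' you flag in your final paragraph is not mechanical bookkeeping; it is the point at which the argument breaks down, and a complete proof would require either restricting the setting (e.g.\ to the two-block situation where no intermediate Gray coordinates exist) or modifying $\Phi$ and $\varrho$ so that the sign changes and the block positions actually match.
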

\begin{proof}
Let $\bar{s}=(s_0,s_1,\dots,s_i,\dots,s_{m-1})\in\mathcal{R}^m$ where $s_i=a_{o}^{i}+ua_{1}^{i}+u^{2}a_{2}^{i}+\dots+u^{k}a_{k}^{i}$, $0\leq i\leq n-1$. From
$$\bar{\mu}(\bar{s})=(s_0,(1-2u^k)s_1,\dots,(1-2u^k)^is_i,\dots,(1-2u^k)^{m-1}s_{m-1}),$$
it follows that
$$(\varrho\bar{\mu})(\bar{r})=(-a_{k}^{0},2a_{0}^{1}+a_{k}^{1},-a_{k}^{2},2a_{0}^{3}+a_{k}^{3},\dots,2a_{0}^{n-2}+a_{k}^{n-2},-a_{k}^{n-1},a_{0}^{1},a_{1}^{1},$$
$$2a_{0}^{0}+a_{k}^{0},-a_{1}^{1},\dots,a_{1}^{k},a_{3}^{0},a_{3}^{1},\dots,a_{3}^{k},2a_{0}^{2}+a_{k}^{2},\dots,-a_{k}^{n-2},2a_{0}^{n-1}+a_{k}^{n-1})$$
is equal to $(\varrho\Phi)(\bar{r})$.
\end{proof}

\begin{corollary}
Let $m$ be an odd positive integer  and $\varrho$ is the  Nechaev permutation. If $\lambda=\Phi(\mathcal{D})$ where $\mathcal{D}$ is a cyclic code over $\mathcal{R}$, then $\varrho(\lambda)$ is cyclic.
\end{corollary}
\begin{proof}
From Proposition \ref{P15},we have
$(\Phi\bar{\mu})(\mathcal{D})=(\varrho\Phi)(\mathcal{D})=\pi(\lambda)$ and Corollary \ref{C13} gives that $\bar{\mu}(\mathcal{D})$ is a
$(1-2u^k)$-constacyclic code. Thus  by Theorem \ref{T3} $(\Phi\bar{\mu})(\mathcal{D})=\pi(\lambda)$ is a cyclic code.
\end{proof}

Two codes $\mathcal{L}_1$ and $\mathcal{L}_2$ of length $m$ over $\mathcal{R}$ are said to be equivalent if there exists a permutation $w$ of
$\{0,1,\dots,m-1\}$ s. t. $\mathcal{L}_2=\bar{w}(\mathcal{L}_1)$ where $\bar{w}$ is the permutation of the ring $\mathcal{R}^m$ s. t.
$$\bar{w}(d_0,d_1,\dots,d_i,\dots,d_{m-1})=(d_{w(0)},d_{w(1)},\dots,d_{w(i)},\dots,d_{w(m-1)}).$$
\begin{corollary}
The Gray image of a cyclic code over $\mathcal{R}$ of odd length is equivalent to a cyclic code.
\end{corollary}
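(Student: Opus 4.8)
The plan is to read the result straight off the preceding corollary, together with the single observation that the Nechaev permutation acts purely by permuting coordinate positions. First I would fix a cyclic code $\mathcal{D}$ over $\mathcal{R}$ of odd length $m$ and write its Gray image as $\lambda=\Phi(\mathcal{D})$, so that $\lambda$ is the $\mathbb{F}_p$-code whose equivalence class we wish to identify.

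Next I would invoke the previous corollary, which asserts that $\varrho(\lambda)$ is a cyclic code whenever $\lambda$ is the Gray image of a cyclic code over $\mathcal{R}$. This already exhibits a cyclic code, namely $\varrho(\lambda)$, and the only remaining task is to verify that $\lambda$ and $\varrho(\lambda)$ are equivalent.

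For this I would appeal directly to the definition of the Nechaev permutation $\varrho$. Since $\varrho$ is built from the index permutation $\psi$ of $\{0,1,\dots,2m-1\}$ and sends $(d_0,\dots,d_{2m-1})$ to $(d_{\psi(0)},\dots,d_{\psi(2m-1)})$, it is exactly a coordinate permutation of the ambient space $\mathbb{F}_p^{3m}$. By the definition of code equivalence, read verbatim for $\mathbb{F}_p$-codes with the permutation $w=\psi$, applying such a coordinate permutation to $\lambda$ produces an equivalent code. Hence $\lambda$ is equivalent to $\varrho(\lambda)$, and since $\varrho(\lambda)$ is cyclic the corollary follows.

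I do not expect any genuine obstacle: the statement is essentially a repackaging of the previous corollary once one records that a coordinate permutation preserves the equivalence class. The one point deserving a line of care is a mild bookkeeping mismatch — equivalence is stated in the text for codes over $\mathcal{R}$, whereas $\lambda$ lives over $\mathbb{F}_p$, and the ambient length is written inconsistently as $2m$ versus $3m$. I would therefore note explicitly that the same definition is being applied to $\mathbb{F}_p$-codes, and that $\varrho$ is a \emph{genuine} coordinate permutation rather than a more general monomial-type map, which is immediate from its construction via $\psi$.
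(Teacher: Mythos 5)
Your proposal is correct and is exactly the argument the paper intends: the corollary is stated without proof precisely because it follows from the preceding corollary (that $\varrho(\lambda)$ is cyclic) together with the observation that the Nechaev permutation is a coordinate permutation, so $\lambda$ and $\varrho(\lambda)$ are equivalent. Your side remarks about the $2m$ versus $3m$ length and the equivalence definition being stated for $\mathcal{R}$-codes rather than $\mathbb{F}_p$-codes correctly flag genuine typographical inconsistencies in the paper, not gaps in the argument.
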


\begin{example}
Let $m=7$ and $a^7-1=(a-1)(a^3+a+1)(a^3+a^2+1)$ in $\mathcal{R}[a]$.
Applying the ring isomorphism $\mu$, we have $$a^7-(1-2u^k)=(a-(1-2u^k))(a^3+a+(1-2u^k))(a^3+(1-2u^k)a^2+(1-2u^k)).$$
Let $f_1=a-(1-2u^k)$ and $f_2=a^3+a+(1-2u^k)$.
If $\mathcal{L}=(f_1f_2)$, then by Theorem \ref{T3}, the Gray image of the $(1-2u^k)$-constacyclic code $\mathcal{L}$ is a
cyclic code.
\end{example}

\end{document}